\documentclass{ifacconf}

\usepackage{graphicx,amsmath}      
\usepackage{natbib}        
\usepackage{booktabs}
\usepackage{siunitx}
\usepackage{graphicx}
\usepackage{amsmath}
\usepackage{amsfonts}
\usepackage{mathrsfs}
\newcommand{\End}{\hfill $\square$}

\usepackage{booktabs}
\usepackage{multirow}
\usepackage{adjustbox} 

\newtheorem{theorem}{Theorem}
\newtheorem{corollary}{Corollary}
\newtheorem{lemma}{Lemma}
\newtheorem{proposition}{Proposition}

\newtheorem{remark}{Remark}

\newtheorem{assumption}{Assumption}

\DeclareFontFamily{U}{mathx}{\hyphenchar\font45}
\DeclareFontShape{U}{mathx}{m}{n}{
	<5> <6> <7> <8> <9> <10>
	<10.95> <12> <14.4> <17.28> <20.74> <24.88>
	mathx10
}{}
\DeclareSymbolFont{mathx}{U}{mathx}{m}{n}
\DeclareMathSymbol{\bigtimes}{1}{mathx}{"91}

\NewDocumentCommand{\tra}{om}{%
	\IfNoValueTF{#1}
	{#2}
	{#2_{[#1]}}%
}

\usepackage{tikz}
\usetikzlibrary{shadows,arrows.meta,positioning,backgrounds,fit,chains,scopes,calc,shapes}
\usetikzlibrary{backgrounds}
\definecolor{gray1}{rgb}{0.9,	0.91,	0.93}
\definecolor{lightblue}{rgb}{0.6, 0.76, 0.92}
\definecolor{gray2}{rgb}{0.95, 0.95,0.95	}
\definecolor{gray3}{rgb}{0.85, 0.85, 0.85	}
\definecolor{lightblue1}{rgb}{0.87, 0.92, 0.97	}
\definecolor{lightblue2}{rgb}{0.62, 0.76, 0.90	}
\definecolor{lightblue3}{rgb}{0.36	0.61	0.84}

\newcommand{\R}{\mathbb{R}}

\newcommand{\Tf}{ T}
\newcommand{\Tp}{ T_\text p}
\newcommand{\temp}{ \Theta}

\begin{document}
\begin{frontmatter}

\title{Optimal energy conversion -- A port-Hamiltonian approach}

\author[First]{Guanru Pan} 
\author[Second]{Karl Worthmann} 
\author[Third]{Manuel Schaller} 
\author[First]{Timm Faulwasser}

\address[First]{
	 Hamburg University of Technology, Hamburg, Germany (e-mails: guanru.pan@tuhh.de, timm.faulwasser@ieee.org).}
\address[Second]{
TU Ilmenau, Ilmenau, Germany (e-mail: karl.worthmann@tu-ilmenau.de).}
\address[Third]{
TU Chemnitz, Chemnitz, Germany (e-mail: manuel.schaller@math.tu-chemnitz.de).}
\begin{abstract}
This paper studies optimal energy conversion in port-Hamiltonian systems. We analyze state- and control-constrained finite-horizon optimal control problems that maximize energy conversion efficiencies. Under suitable reachability  conditions, we show that the asymptotic optimal efficiencies exist and coincide with the corresponding optimal cyclic efficiencies. For a class of port-Hamiltonian systems describing coupled storage-conversion components, we further derive intrinsic limits on achievable energy conversion and establish a generalized Carnot-type efficiency bound. Our results provide an optimal-control framework connecting asymptotic optimal operation, cyclic operation, and fundamental efficiency limits for port-Hamiltonian systems.
\end{abstract}

\begin{keyword}
Optimal control;   asymptotic analysis; nonlinear dynamical systems;  Carnot cycle; dissipativity; 
\end{keyword}

\end{frontmatter}

\section{Introduction}
Energy conversion is a fundamental mechanism in a wide range of engineering systems, including electromechanical devices, power electronic converters, and thermodynamic processes. In thermodynamics, cyclic operation provides a natural paradigm for analyzing energy conversion, where operating cycles are typically designed a priori and idealized as quasi-static to characterize the maximum achievable efficiency. Modern engineering systems, however, often operate dynamically under state, control, and physical constraints, raising the question of how such dynamic operation can be optimally controlled 
to achieve maximal
energy-conversion efficiency.
In this context, energy-based modeling provides a natural framework for describing conversion systems and their interactions. In particular, port-Hamiltonian (pH) systems offer a structured representation of energy storage, interconnection, dissipation, and external power exchange, thereby enabling a unified treatment of multi-physics systems~\citep{Duindam_2009, Schaft2014}. The pH framework has proven particularly effective for the design of passivity-based controllers~\citep{Ortega_2002,Zhang2018,Macchelli2023}.

On the control side, optimization-based methods have been widely used to improve the performance for constrained dynamical systems~\citep{Mayne2014}. Recent contributions have extended optimal control formulations to pH systems by incorporating energy-, entropy-, and exergy-related objectives~\citep{Schaller2021,Philipp2024}. 
Moreover, dissipativity-based analysis and turnpike theory have revealed that long-horizon optimal operation is often closely related to steady-state or periodic behavior~\citep{damm2014exponential,Zanon2017,Koehler2018,Tr_lat_2018,Faulwasser2022a}.

Energy conversion in port-Hamiltonian systems has recently been studied
from a structural perspective
\citep{Schaft2021,Schaft2022}.
 These works relate the
transfer of energy between two ports to the interconnection
topology.
In particular, without direct interconnection between the
state components associated with injection and
extraction, energy cannot be converted while keeping the
injection-side output constant. This motivates generalized
Carnot cycles alternating between different constant
injection-side outputs and constant-state phases.

More broadly, \cite{Schaft2021a} motivates extending
concepts from classical thermodynamics to systems and control. There,
First and Second Laws of Thermodynamics are interpreted in terms of cyclo-losslessness
and cyclo-dissipativity, respectively, and the classical Carnot bound is
derived from the Kelvin formulation of the Second Law. This treatment
raises the question of whether analogous bounds can be established for
two-port systems beyond thermodynamics.

Despite these developments, optimal operation  has not been characterized.
Indeed, \cite{Schaft2021} explicitly identify the
formulation of optimal energy conversion and the extension of Carnot
cycles to more general cyclic processes with a notion of efficiency as
open problems. Moreover, although a Carnot-type efficiency is derived
for the generalized Carnot construction, whether it bounds the optimal
efficiency remains open.

Motivated by these observations, this paper investigates constrained energy conversion in two-port port-Hamiltonian systems from an optimal control perspective. We study the relation between finite-horizon optimal operation, optimal cyclic operation, and fundamental efficiency limitations.
The main contributions of this paper are summarized as follows:
i) We formulate finite-horizon optimal control problems for two complementary notions of energy conversion efficiency and characterize their asymptotic optimal performance.
ii) Under suitable reachability conditions, we prove that the asymptotic optimal efficiencies coincide with the corresponding optimal cyclic efficiencies.
iii) Building upon the energy-conversion framework of~\cite{Schaft2021,Schaft2022}, we investigate a class of block-diagonal port-Hamiltonian systems representing coupled storage-conversion components. We identify structural restrictions on achievable energy conversion and, under an additional scalar structural restriction, establish a generalized Carnot-type efficiency as a rigorous upper bound.

The remainder of the paper is organized as follows. Section~\ref{sec:twoport}
states the energy conversion problem within the two-port port-Hamiltonian
framework. Section~\ref{sec:OCP} formulates the corresponding optimal control
problems and derives general bounds on the conversion efficiencies using the
available storage function. Section~\ref{sec:asymptotic} establishes the
asymptotic characterization of the optimal efficiencies and their relation to
optimal cyclic operation. Section~\ref{sec:limits} investigates energy
conversion in block-diagonal port-Hamiltonian systems, while Section~\ref{sec:carnot} establishes a
generalized Carnot-type efficiency bound under an additional scalar structural
restriction. Section~\ref{sec:simulation} presents numerical examples, and
Section~\ref{sec:conclusion} concludes the paper.

\section{Problem statement} \label{sec:twoport}

We consider nonlinear port-Hamiltonian (pH) systems in two-port form, 
\begin{subequations}\label{eq:PH}
\begin{align}
\dot x &= \big(J(x)-R(x)\big)H_x(x)
+ \begin{bmatrix}
    B_1^\top(x) \\ B_2^\top(x)
\end{bmatrix}^\top \begin{bmatrix} u_1 \\ u_2 \end{bmatrix},  
 \\
\begin{bmatrix} y_1 \\ y_2 \end{bmatrix} 
&= \begin{bmatrix}
    B_1^\top(x) \\ B_2^\top(x)
\end{bmatrix}H_x(x), 
\end{align}
\end{subequations}
with initial condition $x(0)=x_0 \in \mathbb R^{n_x}$. Here, the interconnection matrix  $J(x)=-J^\top(x)$, the dissipation matrix  $R(x)=R^\top(x)\succeq 0$, and the input matrix $B(x)\doteq \begin{bmatrix}
    B_1(x) & B_2(x)
\end{bmatrix}\in\mathbb{R}^{n_x\times(m_1+m_2)}$ may depend on the state $x \in\mathbb{R}^{n_x}$. The maps $J:\mathbb R^{n_x} \to\mathbb R^{n_x\times n_x}  $, $R:\mathbb R^{n_x} \to\mathbb R^{n_x\times n_x}$, and $B:\mathbb R^{n_x} \to\mathbb R^{n_x\times (m_1+m_2)}$ are assumed to be locally Lipschitz continuous. The continuously differentiable Hamiltonian $H:\mathbb{R}^{n_x}\to\mathbb{R}_{\geq 0}$ represents the total stored energy of the system, while $H_x \doteq \nabla_x H:\mathbb{R}^{n_x}\to\mathbb{R}^{n_x}$ denotes the gradient of the Hamiltonian with respect to the state~$x$.

As depicted in Fig.~\ref{fig:2port}, the system interacts with its environment through two external ports: an \emph{energy-injection port} with power variables $(u_1,y_1)\in\mathbb{R}^{m_1}\times \mathbb{R}^{m_1}$ and an \emph{energy-extraction port} with $(u_2,y_2)\in \mathbb{R}^{m_2}\times \mathbb{R}^{m_2}$. We study the problem of \emph{dynamic energy conversion} between these two ports. 
\begin{figure}
    \centering
    \includegraphics[height=0.4\linewidth]{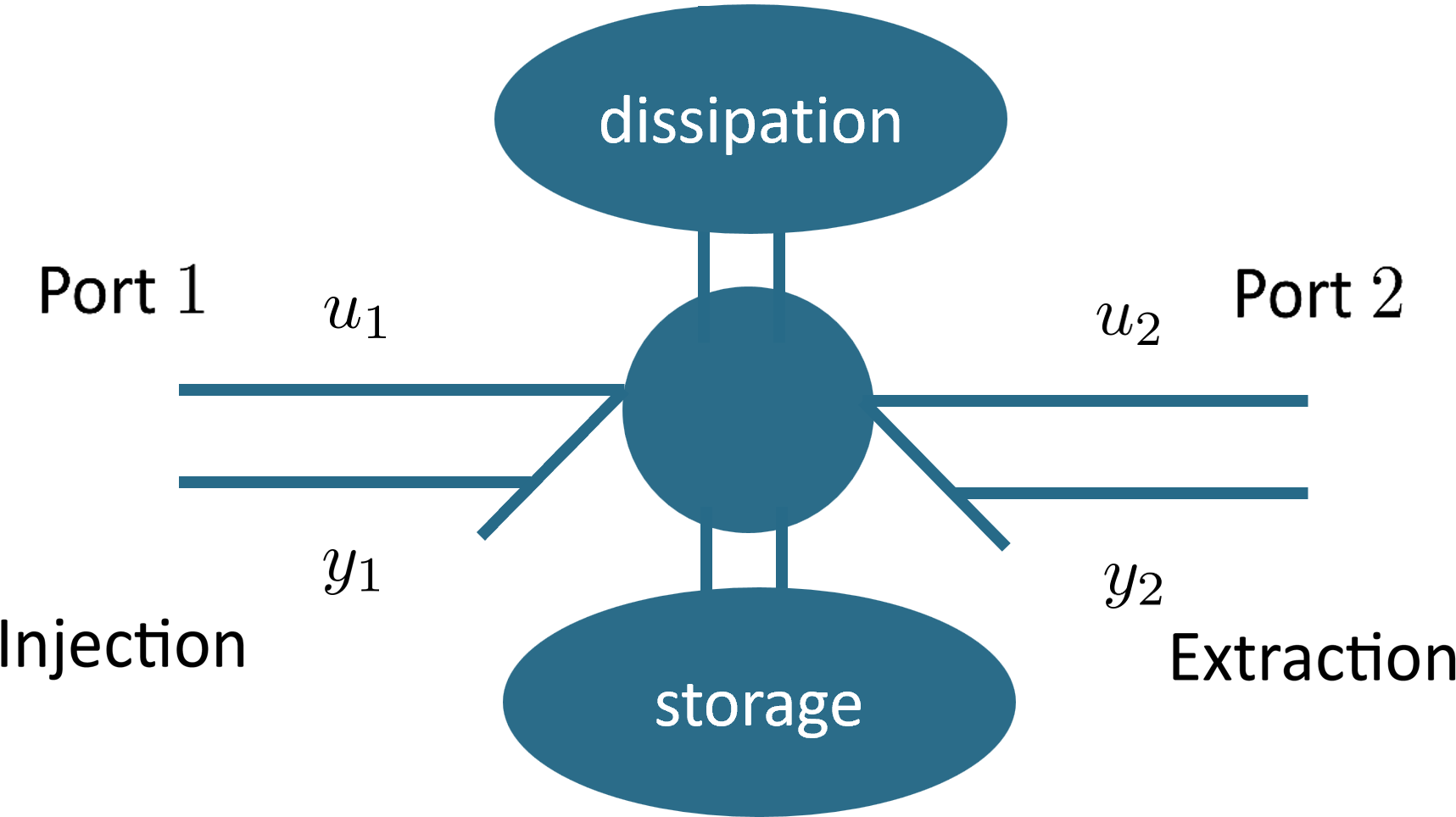}
    \caption{A two-port port-Hamiltonian system with Port~1 for energy injection and Port~2 for energy extraction.}
    \label{fig:2port}
\end{figure}
Directly from the structure of system~\eqref{eq:PH}, trajectories satisfy the \emph{energy balance} and the related \emph{passivity inequality}, for $T \in \mathbb R^+$,
\begin{align}
    &H(x(\Tf)) - H(x(0)) \nonumber \\& = \int_0^{\Tf} - H_x(x(t))^\top R(x(t)) H_x(x(t)) + \left[\begin{smallmatrix}
        u_1(t)\\
        u_2(t)
    \end{smallmatrix}\right]^\top \left[\begin{smallmatrix}
        y_1(t)\\
        y_2(t)
    \end{smallmatrix}\right]\,\mathrm{d}t \nonumber \\
    &\leq \int_0^{\Tf} u_1^\top(t) y_1(t) + u_2^\top(t) y_2(t)\,\mathrm{d}t. \label{eq:supplied} 
\end{align}

We emphasize that the distinction between \emph{injected} and
\emph{extracted} energy is specific to the conversion setting
considered here. Formally, the passivity inequality depends only on the
total supply rate
\[
u^\top y = u_1^\top y_1 + u_2^\top y_2,
\]
and treats both ports symmetrically. The distinction between energy injection and extraction ports is
therefore \emph{semantic rather than intrinsic}: it is identified by the
user according to the desired energy-conversion interpretation and does
not affect the underlying passivity property. However, from a thermodynamical perspective, this semantic difference allows to investigate problems of energy conversion through the lens of open control systems.

For a state-output trajectory \((x(\cdot),y(\cdot))\) resulting from 
initial
condition \(x(0)=x_0\) and input $u(\cdot)$, we define the energy injected at Port~1 and the energy
extracted at Port~2 as
\begin{subequations}\label{eq:port_energy}
\begin{align}
E_1(x_0,u(\cdot))
&\doteq
\int_0^{\Tf} u_1^\top(t)y_1(t)\,\mathrm dt,\\
E_2(x_0,u(\cdot))
&\doteq
-\int_0^{\Tf} u_2^\top(t)y_2(t)\,\mathrm dt.
\end{align}
\end{subequations}

Whenever \(E_1(x_0,u(\cdot))>0\), the conversion efficiency is
defined as
\begin{equation}\label{eq:eff_true}
\eta(x_0,u(\cdot))
\doteq
\frac{E_2(x_0,u(\cdot))}
{E_1(x_0,u(\cdot))}.
\end{equation}
Besides $\eta$, we also consider  the
\emph{generalized thermal efficiency}
\begin{equation}\label{eq:gamma}
{\eta_{+}}(x_0,u(\cdot))
\doteq
\frac{E_2(x_0,u(\cdot))}
     {E_{1,+}(x_0,u(\cdot))},    
\end{equation}
where $[a]_+ := \max\{a,0\}$ is the positive part of $a\in \R$ and
\begin{equation}\label{eq:E1plus}
E_{1,+}(x_0,u(\cdot))
\doteq
\int_0^{\Tf}
\left[u_1^\top(t)y_1(t)\right]_+\,\mathrm{d}t \geq 0
\end{equation}
denotes the energy accumulated from the positive power transfer through
Port~1. As $E_{1,+} \geq E_1$, we have
\begin{equation}~\label{eq:efficency_comparison}
0\leq{\eta_{+}} \leq \eta \quad \text{if} \quad E_1> 0 \text{ and } E_2 \geq 0.
\end{equation}

The generalized thermal efficiency efficiency $\eta_+$ distinguishes energy supplied
through Port~1 from energy returned through the same port. To motivate
this terminology, consider a heat engine for which
$u_1^\top y_1=\dot Q$ is the heat-transfer rate and
$-u_2^\top y_2=\dot W$ is the mechanical power output. Then
\[
E_{1,+}=Q_{\mathrm{in}}
\doteq\int_0^T[\dot Q(t)]_+\,\mathrm{d}t,
\,\,
E_2=W_{\mathrm{net}}
\doteq\int_0^T \dot W(t)\,\mathrm{d}t,
\]
and hence
$ \displaystyle
\eta_+
=\frac{W_{\mathrm{net}}}{Q_{\mathrm{in}}}
$
corresponds to the conventional thermal efficiency of a heat engine; see \citep{Struchtrup_2024}. Here,
the heat rejected by the system,
$ \displaystyle
Q_{\mathrm{out}}
\doteq\int_0^T[-\dot Q(t)]_+\,\mathrm{d}t,
$
is regarded as heat dissipated to
the environment, i.e. waste heat. For general two-port systems, $\eta_+$ extends this
input--output distinction beyond the thermodynamic setting.

In this paper, we investigate the optimal energy conversion problem by maximizing the conversion efficiencies $\eta$ and $\eta_{+}$ subject to input and state constraints.
Let $\mathbb X \subset \mathbb R^{n_x}$ and
$\mathbb U \subset \mathbb R^{m_1+m_2}$ be compact sets containing the
origin in their interiors. Further, let $\mathbb U$ be convex.\footnote{Formally, one could consider
general mixed state-input constraints $(x,u)\in\mathbb Z$ with compact
$\mathbb Z$.}
For a finite horizon $T\in\mathbb R^+$, we consider a measurable input function
$u\in L^1(0,T;\mathbb R^{m_1+m_2})$ and corresponding absolutely continuous state function
$x\in W^{1,1}(0,T;\mathbb R^{n_x})$ satisfying \footnote{We have $u \in L^1(0,T;\mathbb R^{m_1 + m_2})$ and $x\in W^{1,1}(0,T;\mathbb R^{n_x})$, i.e., the input as well as the state trajectory and its derivative are absolutely integrable.}  
\[
u(t)\in\mathbb U,\qquad x(t)\in\mathbb X
\qquad \text{for all }t\in[0,T].
\]

Due to compactness of $\mathbb X$ and continuity of $H$, the Hamiltonian
is uniformly bounded, i.e., there exists $\overline H\in\mathbb R_+$
such that
\begin{equation}\label{eq:barH}
    0\leq H(x)\leq\overline H
    \qquad \forall\,x\in\mathbb X.
\end{equation}

Moreover, since \(B\) and \(H_x\) are continuous and \(\mathbb X\times\mathbb U\)
is compact, the exchanged power at both ports is uniformly bounded. In particular,
\begin{subequations}\label{eq:p_bound}
\begin{align}
 p_1
&\doteq
\max_{(x,u)\in\mathbb X\times\mathbb U}
|u_1^\top B_1(x)H_x(x)|,\\
 p_2
&\doteq
\max_{(x,u)\in\mathbb X\times\mathbb U}
|u_2^\top B_2(x)H_x(x)|
\end{align}
\end{subequations}
are finite. By definition, for all
\(u\in\mathbb U\) and \(x\in\mathbb X\),
\begin{align*}
    |u_1^\top y_1|= |u_1^\top B_1(x)H_x(x)|\le  p_1,\\
|u_2^\top y_2| =|u_2^\top B_2(x)H_x(x)|\le  p_2.
\end{align*}

\section{Optimal energy conversion problems}\label{sec:OCP}

For $\phi \in \{\eta, {\eta_{+}}\}$, we consider the following Optimal Control Problem (OCP) that aims to maximize the corresponding conversion efficiency for a given horizon $T \in \mathbb R^+$, i.e., we either maximize the energy conversion rate~\eqref{eq:eff_true} or the generalized thermal efficiency~\eqref{eq:gamma} 
\begin{subequations}\label{eq:OCP_Efficiency}
    \begin{align}
        & \phi^\star(x_0,\Tf) \doteq
        \max_{u\in L^1(0,T;\mathbb{R}^{m_1+m_2})}
        \phi(x_0, u(\cdot))
        \\
        \text{s.t.  }\quad&
        \text{system dynamics~\eqref{eq:PH},} \quad x(0)=x_0,\\
        &
        u(t)\in\mathbb U,\;
        x(t)\in\mathbb X
        \quad \forall\,t\in[0,\Tf],\label{eq:state_input_constraints}\\
        &
        -\int_0^{\Tf}u_2^\top(t)y_2(t)\,\mathrm{d}t
        \ge p\Tf. \label{eq:extraction_requirement}
    \end{align}
\end{subequations}
For $\textrm{OCP}$~\eqref{eq:OCP_Efficiency}, the existence of an optimal control is shown in~Proposition~\ref{prop:existence_optimal_controls}, see appendix. 
With $p \in \mathbb R^+$, the minimum energy extraction constraint in \eqref{eq:extraction_requirement} guarantees that the optimal conversion efficiency $\phi^\star$ is positive for sufficiently large horizons $\Tf$, while simultaneously ensuring that the optimization problem is well-posed; see Proposition~\ref{pro:nE_Sa}. Such a minimum extraction requirement is common in energy conversion applications, where a prescribed amount of extracted energy must be achieved while maximizing the conversion efficiency.

To establish bounds on the optimal conversion efficiencies \(\eta^\star\) and $\eta_{+}^\star$, we first revisit the notion of \emph{available storage} from dissipativity theory \citep{Willems1972}. For the port-Hamiltonian system~\eqref{eq:PH}, the available storage is defined as
\begin{subequations}\label{eq:avaiable_storage}
\begin{align}
S_a(x_0) \doteq\;&
\sup_{\substack{u\in L^1(0,T;\mathbb{R}^{m_1+m_2}), \Tf>0}}
-\int_0^{\Tf}u^\top(t)y(t)\,\mathrm{d}t\\
\text{s.t.}\;&
\text{system dynamics~\eqref{eq:PH}, $x(0)=x_0$,}\\
&
u(t)\in\mathbb U,\;
x(t)\in\mathbb X
\quad
\forall\,t\in[0,\Tf].
\end{align}
\end{subequations}
It characterizes the maximum net energy that can be extracted from the system through  both ports, as a function of
the initial state \(x_0\). By the passivity inequality~\eqref{eq:supplied}, we have 
\[
0\le S_a(x_0)\le H(x_0)-\inf_{x\in\mathbb X}H(x);
\]
see~\citep{Willems1972,Schaft2017}. Hence, by~\eqref{eq:barH}, $S_a$ is bounded on $\mathbb X$ . 

Available storage in OCP~\eqref{eq:avaiable_storage} and OCP~\eqref{eq:OCP_Efficiency} describe different
energy-conversion settings. While $S_a$ quantifies the energy that can
be extracted by exploiting the energy initially stored in the system,
OCP~\eqref{eq:OCP_Efficiency} considers sustained conversion enabled by
energy injection through Port~1. In particular, when \(H(x_0)=0\) and \(R(x)\succ0\), passivity
implies that the optimal value of~\eqref{eq:avaiable_storage} is attained
by \(u_1(\cdot)\equiv u_2(\cdot)\equiv0\), since no net energy can be
extracted from the system. Nevertheless,
OCP~\eqref{eq:OCP_Efficiency} may still admit nontrivial and meaningful
optimal inputs \(u_1(\cdot)\) and \(u_2(\cdot)\), as the external energy
injected through Port~1 can ensure and maintain a sustainable
energy-conversion process.

For a finite, but sufficiently large horizon \(\Tf\in\mathbb R^+\), the available
storage consequently provides  bounds on the achievable efficiencies.

\begin{proposition}
[Efficiency bounds via available storage]
\label{pro:nE_Sa}~\\
Consider $\mathrm{OCP}$~\eqref{eq:OCP_Efficiency}. If
\begin{equation}\label{eq:T_requirement}
\Tf>\frac{S_a(x_0)}{p},
\end{equation}
then
\begin{equation}\label{eq:eta_true_upperbound}
\frac{p}{p_1}
\le
\eta_{+}^\star(x_0,\Tf)
\le
\eta^\star(x_0,\Tf)
\le
\frac{p\Tf}{p\Tf-S_a(x_0)},
\end{equation}
where $p_1$ is defined in \eqref{eq:p_bound}.

Consequently, for all $x_0\in\mathbb X_0$,
\begin{subequations}\label{eq:eta_lim_bounds}
\begin{align}
\frac{p}{p_1}
&\le
\liminf_{\Tf\to\infty}\eta_{+}^\star(x_0,\Tf)
\le
\liminf_{\Tf\to\infty}\eta^\star(x_0,\Tf)
\le 1,
\\
\frac{p}{p_1}
&\le
\limsup_{\Tf\to\infty}\eta_{+}^\star(x_0,\Tf)
\le
\limsup_{\Tf\to\infty}\eta^\star(x_0,\Tf)
\le 1.
\end{align}
\end{subequations}
    
\end{proposition}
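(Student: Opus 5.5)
Let $u$ be any input that is feasible for OCP~\eqref{eq:OCP_Efficiency}; in particular it satisfies the extraction constraint~\eqref{eq:extraction_requirement}. The existence of an optimizer is supplied by Proposition~\ref{prop:existence_optimal_controls}, so $\eta^\star$ and $\eta_+^\star$ are values attained at feasible inputs. All bounds will then follow from pointwise estimates valid for every feasible $u$.

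\emph{Upper bound.} I would apply the definition~\eqref{eq:avaiable_storage} of $S_a$ to the same feasible input $u$ on $[0,\Tf]$. This gives
\[
S_a(x_0)\ge -\int_0^{\Tf}u^\top y\,\mathrm dt = E_2-E_1 ,
\]
hence $E_1\ge E_2-S_a(x_0)$. By~\eqref{eq:extraction_requirement} and~\eqref{eq:T_requirement},
\[
E_2-S_a(x_0)\ge p\Tf-S_a(x_0)>0 ,
\]
so $E_1>0$ and $\eta$ is well defined. The map $s\mapsto s/(s-S_a)$ is nonincreasing on $s>S_a\ge0$. Therefore
\[
\eta=\frac{E_2}{E_1}\le\frac{E_2}{E_2-S_a(x_0)}\le \frac{p\Tf}{p\Tf-S_a(x_0)} .
\]
Taking the maximum over feasible $u$ yields the rightmost inequality in~\eqref{eq:eta_true_upperbound}.

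\emph{Middle inequality.} Since $E_1>0$ and $E_2\ge p\Tf\ge0$, inequality~\eqref{eq:efficency_comparison} gives $\eta_+\le\eta$ for every feasible $u$. Maximizing over $u$ yields $\eta_+^\star\le\eta^\star$.

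\emph{Lower bound.} For any feasible $u$, the pointwise power bound gives $E_{1,+}\le p_1\Tf$, while $E_2\ge p\Tf$. Hence
\[
\eta_+=\frac{E_2}{E_{1,+}}\ge \frac{p\Tf}{p_1\Tf}=\frac{p}{p_1}.
\]
Here $E_{1,+}\ge E_1>0$, so the ratio is well defined. Since this holds for every feasible $u$, it holds in particular for the optimizer, so $\eta_+^\star\ge p/p_1$.

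\emph{Main obstacle.} The main obstacle is feasibility. The lower bound requires the feasible set to be nonempty for the given $x_0$ and all large $\Tf$. The statement implicitly assumes this, and the restriction to $x_0\in\mathbb X_0$ suggests it is guaranteed there by the paper's standing assumption (presumably a reachability/feasibility assumption defining $\mathbb X_0$). I would invoke that assumption, or else read the bounds as holding whenever the OCP is feasible.

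\emph{Asymptotic bounds.} Since $S_a(x_0)$ is finite, condition~\eqref{eq:T_requirement} holds for all sufficiently large $\Tf$. Moreover,
\[
\frac{p\Tf}{p\Tf-S_a(x_0)}\to1 \quad\text{as } \Tf\to\infty .
\]
Taking $\liminf$ and $\limsup$ in~\eqref{eq:eta_true_upperbound} then gives~\eqref{eq:eta_lim_bounds}.
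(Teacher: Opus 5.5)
Your proposal is correct and follows the paper's proof essentially step by step. You use the available-storage inequality $E_2-E_1\le S_a(x_0)$ together with monotonicity of $s\mapsto s/(s-S_a(x_0))$ for the upper bound, the comparison~\eqref{eq:efficency_comparison} for the middle inequality, and $E_{1,+}\le p_1\Tf$ with $E_2\ge p\Tf$ for the lower bound. Your caveat about feasibility is not a gap in your argument: the paper's proof also leaves it implicit, and it is only supplied through hypothesis iii) of Proposition~\ref{prop:existence_optimal_controls}.
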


\begin{pf}
By the definition of available storage in~\eqref{eq:avaiable_storage},
every feasible trajectory $(u,y)$ of OCP~\eqref{eq:OCP_Efficiency} satisfies
\[
-\int_0^{\Tf}u^\top(t)y(t)\,\mathrm dt
\le
S_a(x_0).
\]
Using the definitions of
$E_1(x_0,u(\cdot))$ and $E_2(x_0,u(\cdot))$, this inequality becomes
\begin{equation}\label{eq:port_decomposition}
-E_1(x_0,u(\cdot))
+
E_2(x_0,u(\cdot))
\le
S_a(x_0).
\end{equation}
Rearranging yields
\[
E_1(x_0,u(\cdot))
\ge
E_2(x_0,u(\cdot))-S_a(x_0)
\ge
p\Tf-S_a(x_0)
>
0,
\]
where the second inequality follows from the minimum energy extraction
constraint~\eqref{eq:extraction_requirement}, and the last inequality
follows from \eqref{eq:T_requirement}.

Moreover, by the uniform power bounds~\eqref{eq:p_bound},
\[
E_{1,+}(x_0,u(\cdot))
=
\int_0^{\Tf}[u_1^\top(t)y_1(t)]_+\,\mathrm dt
\le
p_1\Tf.
\]
Hence, every feasible trajectory of OCP~\eqref{eq:OCP_Efficiency} satisfies
\[
{\eta_{+}}(x_0,u(\cdot))
=
\frac{E_2(x_0,u(\cdot))}
     {E_{1,+}(x_0,u(\cdot))}
\ge
\frac{p\Tf}{p_1\Tf}
=
\frac{p}{p_1}.
\]

Furthermore, 
\[
\eta(x_0,u(\cdot))
=
\frac{E_2(x_0,u(\cdot))}
     {E_1(x_0,u(\cdot))}
\le
\frac{E_2(x_0,u(\cdot))}
     {E_2(x_0,u(\cdot))-S_a(x_0)}.
\]
Since the function
\[
f(E)=\frac{E}{E-S_a(x_0)}
\]
is monotonically decreasing for $E>S_a(x_0)$, and
as $E_2(x_0,u(\cdot))\ge p\Tf$, we obtain
\[
\eta(x_0,u(\cdot))
\le
\frac{p\Tf}{p\Tf-S_a(x_0)}.
\]
Since the lower bound on ${\eta_{+}}$ and the upper bound on $\eta$ hold for
every feasible trajectory, they also hold for the corresponding optimal
trajectories. Together with the efficiency comparison of ${\eta_{+}}$ and $\eta$
in~\eqref{eq:efficency_comparison}, this proves
\eqref{eq:eta_true_upperbound}. Taking the limit inferior and the limit
superior as $\Tf\to\infty$ yields~\eqref{eq:eta_lim_bounds}.
\End
\end{pf}

The upper bound admits a simple physical interpretation.
Over short horizons, part of the extracted energy may originate from the
system's initially stored energy, quantified by $S_a(x_0)$, rather than from the energy injected
through Port~1, allowing the conversion efficiency to
temporarily exceed one. Since the available storage is finite, this
effect vanishes as the horizon increases, and the asymptotic conversion
efficiency is necessarily bounded by one, which  aligns with the
first law of thermodynamics.

\begin{remark}[Environment-driven injection  scenario]
In addition to the manipulable energy injection setting considered above, the input at Port~1 may be determined by the system environment, i.e.,
\[
u_1(t)=w(t), \qquad t\in[0,\infty),
\]
where \(w\) denotes an exogenous input, such as wave-induced forces.
When \(w(t)\) is fixed and known, the optimization problem can be reformulated by imposing the additional constraint \(u_1(t)=w(t)\) in OCP~\eqref{eq:OCP_Efficiency}. In this case, the problem reduces to constrained efficiency maximization under prescribed   exogenous input $u_1$. Consequently, the achievable performance is generally lower than in the fully controlled setting. 
When \(w(\cdot)\) is uncertain with either deterministic bounds or distributional information, the problem~\eqref{eq:OCP_Efficiency} becomes robust or stochastic. 
The asymptotic and efficiency analysis of this environment-driven setting constitutes an interesting direction for future work.
\End
\end{remark}

\begin{remark}[Limited energy supply]
An isoperimetric constraint
\[
    E_{1,+}(x_0,u(\cdot))\leq \bar E_1
    \qquad \text{for all } T>0
\]
instead of~\eqref{eq:extraction_requirement} can impose a fixed energy
budget at the injection side while leaving its power profile as a
decision variable, as relevant, e.g., for finite-horizon batch
operation. However, an upper bound on $E_{1,+}$ alone allows
arbitrarily small energy injection and may lead to degenerate efficiency
maximization, requiring additional conditions such as a prescribed
energy output $E_2\geq\underline E_2$ independent of $T$. Here, we
instead impose the minimum extraction requirement
\eqref{eq:extraction_requirement}, which scales with the horizon, and
investigate its asymptotic relation to periodic operation. Fixed-budget
formulations constitute a different setting, since sustained repetition
of a nontrivial cycle requires the supplied energy to grow with the
number of cycles, and are beyond the scope of this work.
\End
\end{remark}

\section{Asymptotic Optimal Energy Conversion Efficiency}\label{sec:asymptotic}

In the following, we establish sufficient conditions to tighten the asymptotic lower and upper bounds of the optimal conversion efficiencies
$
\eta^\star(x_0,\Tf)
$ and $
\eta_{+}^\star(x_0,\Tf)$ from Prop.~\ref{pro:nE_Sa}
and subsequently discuss conditions under which the limits
\[
\lim_{\Tf\to\infty}\eta^\star(x_0,\Tf),
\qquad
\lim_{\Tf\to\infty}\eta_{+}^\star(x_0,\Tf)
\]
exist. To this end, we focus on recurrentlly operating regimes, namely equilibrium (steady-state) and periodic (cyclic) operation, which serve as  benchmarks for long-horizon energy conversion.

\subsection{Asymptotic lower bound}

For $\phi\in\{\eta,\eta_{+}\}$ and a given period
$\Tp\in\mathbb R^+$, we define the optimal periodic efficiency as
\begin{subequations}\label{eq:OCP_clc_Eff}
\begin{align}
\quad&\phi_\text p(\Tp)\doteq 
\max_{\substack{u\in L^1(0,T;\mathbb{R}^{m_1+m_2}),\,x_0 \in \mathbb X}} \phi(x_0, u(\cdot)) 
  \\[1mm]
\text{s.t.}\quad 
& \text{system dynamics~\eqref{eq:PH}, $x(0)=x(\Tp)=x_0$,} \\
&\qquad \qquad \eqref{eq:state_input_constraints}-\eqref{eq:extraction_requirement} \label{eq:periodic_constraint}
\end{align}
\end{subequations}

In contrast to OCP~\eqref{eq:OCP_Efficiency}, the initial state is a
decision variable subject to the periodicity condition. The existence
of optimal solutions to OCP~\eqref{eq:OCP_clc_Eff} is discussed in the
Appendix. 

We note that steady-state operation can be interpreted as a limiting case of
\eqref{eq:OCP_clc_Eff} corresponding to constant trajectories for any period. In this case the
state and inputs remain constant, and the system satisfies the equilibrium
conditions of the port-Hamiltonian dynamics. The corresponding steady-state
efficiency maximization problem is 
\begin{subequations}\label{eq:steady_state}
\begin{align}
\eta_{+,\mathrm{s}} =\eta_{\mathrm{s}}
\doteq
&\max_{\bar u,\bar y,\bar x}
\frac{-\bar u_2^\top \bar y_2}{\phantom{-}\bar u_1 \bar y_1} \\[1mm]
\text{s.t.}\quad
&(J(\bar x)-R(\bar x))\,H_x(\bar x) + B(\bar x)\bar u = 0, \\
&\bar y = B^\top(\bar x)\,H_x(\bar x), \\
&\bar x \in \mathbb X,\quad \bar u \in \mathbb U, \\
&-\bar u_2^\top \bar y_2 \geq p,
\end{align}
\end{subequations}
where $\bar u \doteq [\,\bar u_1^\top\ \bar u_2^\top\,]^\top$ and
$\bar y \doteq [\,\bar y_1^\top\ \bar y_2^\top\,]^\top$.  
\begin{lemma}[Steady-state feasibility $\Rightarrow$ cyclic feasibility]~\\
\label{lem:ss_implies_clc_feasible}
Assume that the steady-state problem~\eqref{eq:steady_state} is feasible.
Then $\bar u_1^\top\bar y_1>0$, and the periodic
problem~\eqref{eq:OCP_clc_Eff} is feasible. Moreover, its optimal value
satisfies
\[
\frac{p}{p_1}
\le
\eta_{+,\mathrm s}
=
\eta_{\mathrm s}
\le
\eta_{+,\mathrm p}(\Tp)
\le
\eta_{\mathrm p}(\Tp)
\le
1,
\,\,
\forall\,\Tp\in\mathbb R^+.
\tag*{\End}
\]
\end{lemma}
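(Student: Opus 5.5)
The plan is to feed the steady state directly into the periodic problem, and then to bound everything by taking a steady-state energy balance together with a periodic energy balance.

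\emph{Step 1: positivity of the injected power.} Let $(\bar x,\bar u,\bar y)$ be feasible for~\eqref{eq:steady_state}. Since $\bar x$ is an equilibrium, the energy balance~\eqref{eq:supplied} over any interval gives $0=-H_x(\bar x)^\top R(\bar x)H_x(\bar x)+\bar u^\top\bar y$. Hence
\[
\bar u_1^\top\bar y_1
=H_x(\bar x)^\top R(\bar x)H_x(\bar x)-\bar u_2^\top\bar y_2
\ge -\bar u_2^\top\bar y_2\ge p>0 .
\]
This inequality $\bar u_1^\top\bar y_1\ge -\bar u_2^\top\bar y_2$ also gives $\eta_{\mathrm s}\le 1$. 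From $0<\bar u_1^\top\bar y_1\le p_1$ we get $\eta_{\mathrm s}\ge p/p_1$. For constant trajectories $[\bar u_1^\top\bar y_1]_+=\bar u_1^\top\bar y_1$, so the two steady-state efficiencies coincide, $\eta_{+,\mathrm s}=\eta_{\mathrm s}$. We assume the maximum in~\eqref{eq:steady_state} is attained, by compactness and continuity; otherwise we work with a maximizing sequence.

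\emph{Step 2: cyclic feasibility.} Take $x_0=\bar x$ and $u(t)\equiv\bar u$ on $[0,\Tp]$. Uniqueness of solutions, which follows from local Lipschitz continuity, gives $x(t)\equiv\bar x$. Therefore $x(\Tp)=x_0$, both constraint sets are respected, and $E_2=-\bar u_2^\top\bar y_2\,\Tp\ge p\Tp$. So the pair is feasible for~\eqref{eq:OCP_clc_Eff}. Its efficiencies $\eta$ and $\eta_+$ both equal the steady-state ratio. Applying this to an optimal steady state gives $\eta_{\mathrm s}\le\eta_{+,\mathrm p}(\Tp)$.

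\emph{Step 3: the remaining inequalities.} For any feasible periodic pair, $E_2\ge p\Tp>0$. Moreover, $x(\Tp)=x_0$ turns~\eqref{eq:supplied} into $0\le E_1-E_2$, so $E_1\ge E_2>0$. This gives $\eta\le 1$, and by~\eqref{eq:efficency_comparison} also $\eta_+\le\eta$. Taking maxima yields $\eta_{+,\mathrm p}(\Tp)\le\eta_{\mathrm p}(\Tp)\le 1$. The bound $\eta_+\le\eta$ compares two different maximizers, but this is harmless: the $\eta_+$-maximizer is itself feasible for the $\eta$-problem and satisfies $\eta_+\le\eta$ there.

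The main obstacle is well-posedness of the maxima, namely that they exist and that $E_1>0$ so the ratios are defined. In every case this reduces to the periodic energy balance together with the extraction constraint. Existence of optimal periodic solutions is taken from the appendix.
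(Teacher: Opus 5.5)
Your proof is correct and follows essentially the same route as the paper: passivity at the equilibrium gives $\bar u_1^\top\bar y_1\ge-\bar u_2^\top\bar y_2\ge p>0$, the constant trajectory $(\bar x,\bar u)$ is feasible for every period, and the periodic energy balance yields $E_2\le E_1\le E_{1,+}$. Your direct derivation of $\eta_{\mathrm s}\ge p/p_1$ and your explicit comparison of the two maximizers are tidier than the paper's appeal to Proposition~\ref{pro:nE_Sa}. Your one slip is invoking uniqueness of solutions: it is unnecessary, since it suffices that the constant function solves the dynamics, and it is not guaranteed here, because $H$ is only assumed $C^1$.
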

\begin{pf}
Let $(\bar x,\bar u,\bar y)$ be any feasible steady-state triple for
\eqref{eq:steady_state}. By passivity, every steady-state trajectory satisfies
\[
\bar u_1^\top\bar y_1+\bar u_2^\top\bar y_2\ge 0,
\]
which yields
\[
\bar u_1^\top\bar y_1
\ge
-\bar u_2^\top\bar y_2
\ge
p>0.
\]
Hence, $\eta_{+,\text s} = \eta_\text s$ as $\max(\bar u_1^\top\bar y_1,0) = \bar u_1^\top\bar y_1
>0$.

Now fix an arbitrary $\Tp\in\mathbb R^+$ and define the constant trajectories
\[
x(t)\equiv\bar x,\qquad
u(t)\equiv\bar u,\qquad
y(t)\equiv\bar y,
\qquad
\forall\,t\in[0,\Tp].
\]
By the equilibrium constraint in~\eqref{eq:steady_state}, these
trajectories satisfy the port-Hamiltonian dynamics~\eqref{eq:PH}. The
constraints~\eqref{eq:periodic_constraint} hold by feasibility of
\eqref{eq:steady_state}, while the periodicity constraint is trivially
satisfied since $x(0)=x(\Tp)=\bar x$. Therefore,
$(x(\cdot),u(\cdot))$ is feasible for
\eqref{eq:OCP_clc_Eff}, implying that the periodic problem is feasible.

As the constant trajectory is feasible for every
$\Tp>0$, combining~\eqref{eq:eta_true_upperbound}, optimality of the periodic problem yields
\[
\frac{p}{p_1}\leq \eta_{\mathrm s} = \eta_{+,\text s}
\le
\eta_{+,\mathrm p}(\Tp)\le
\eta_{\mathrm p}(\Tp).
\]
Since every feasible periodic trajectory satisfies
\[
H(x(\Tp))=H(x(0)),
\]
the dissipation inequality implies
\[
E_2(x_0,u(\cdot))
\le
E_1(x_0,u(\cdot))
\le
E_{1,+}(x_0,u(\cdot)).
\]
Consequently,
\[
\eta_{+,\mathrm p}(\Tp)
\le
\eta_{\mathrm p}(\Tp)
\le
1.
\]
Combining the above results concludes the proof.
\End
\end{pf}

Suppose that, for a given period $\Tp \in \mathbb R^+$ and for $\phi\in\{\eta,{\eta_{+}}\}$,
OCP~\eqref{eq:OCP_clc_Eff} admits an optimal solution with optimal
cyclic efficiency $\phi_{\mathrm p}(\Tp)$.
Let
$(u_{\mathrm p}^\star(\cdot),x_{\mathrm p}^\star(\cdot))$
denote a corresponding optimal periodic trajectory, and define the
associated optimal orbit by
\[
\mathbb O(\Tp)
\doteq
\left\{
x_{\mathrm p}^\star(t)\in\mathbb X
\,\middle|\,
t\in[0,\Tp]
\right\}.
\]
We assume that at least one optimal cyclic orbit can be reached uniformly from
the initial states of the finite-horizon problem.

\begin{assumption}[Reachability of the optimal cyclic orbit]
\label{ass:reachability}
Let $\mathbb X_0\subset\mathbb X$ be a compact set of initial states.
There exists a constant $\bar T>0$ such that, for every
$x_0\in\mathbb X_0$ and $\phi\in\{\eta,\eta_+\}$, there exist
$t_1\in[0,\bar T]$ and an input trajectory $u(\cdot)$, with
$u(t)\in\mathbb U$ for all $t\in[0,t_1]$, such that the corresponding
state trajectory satisfies
\[
    x(0)=x_0,\,\,
    x(t)\in\mathbb X\,\, \forall\,t\in[0,t_1],\,\,
    x(t_1)\in\mathbb O(\Tp). \tag*{\End}
\]
\end{assumption}

This assumption reflects the practical capability of energy conversion systems to enter a desired  operating regime within finite time via start-up  maneuvers. It excludes pathological cases in which the state or energy constraints render the optimal cyclic operation not reachable. 
\begin{theorem}[Asymptotic lower bound]~\label{thm:asymptotic_lower}
Let Assumption~\ref{ass:reachability} hold. 
Suppose that the corresponding optimal cyclic solution
$(u_{\mathrm p}^\star(\cdot),x_{\mathrm p}^\star (\cdot))$ satisfies
\eqref{eq:periodic_constraint} strictly, i.e., there exists
$\delta>0$ such that
\[
E_{2,\mathrm p}^\star
\doteq
-\int_0^{\Tp}
u_{2,\mathrm p}^{\star\top}(t)
y_{2,\mathrm p}^\star(t)\,\mathrm dt
=
p\Tp+\delta.
\]
Then
\[
\liminf_{\Tf\to\infty}
\phi^\star(x_0,\Tf)
\ge
\phi_{\mathrm p}(\Tp),\quad \text{ for }\phi\in\{\eta,{\eta_{+}}\}.
\]
\end{theorem}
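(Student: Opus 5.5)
Given a large horizon $\Tf$, I will construct a feasible control for OCP~\eqref{eq:OCP_Efficiency} by concatenation: a start-up phase that steers the state onto the optimal orbit, followed by repetitions of the optimal cycle, followed by a final incomplete cycle. Comparing the efficiency of this candidate with $\phi_{\mathrm p}(\Tp)$ and letting $\Tf\to\infty$ will then give the bound, because $\phi^\star(x_0,\Tf)$ is at least the efficiency of any feasible control.

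\emph{Construction.} By Assumption~\ref{ass:reachability}, there are $t_1\in[0,\bar T]$ and an admissible input driving $x_0$ to some point $x_{\mathrm p}^\star(s)\in\mathbb O(\Tp)$ while keeping $x\in\mathbb X$. Because $x_{\mathrm p}^\star$ is $\Tp$-periodic, the time-shifted input $u_{\mathrm p}^\star(\cdot+s)$, extended periodically, keeps the state on the orbit. The shifted cycle has the same integrals $E_{1,\mathrm p}^\star$, $E_{1,+,\mathrm p}^\star$ and $E_{2,\mathrm p}^\star$ over any window of length $\Tp$. For $\Tf\ge t_1$, I write $\Tf-t_1=N\Tp+r$ with $r\in[0,\Tp)$ and $N=\lfloor(\Tf-t_1)/\Tp\rfloor$. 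The candidate is the start-up input on $[0,t_1]$, then $N$ shifted cycles, then the first $r$ time units of another shifted cycle. The state and input constraints hold throughout, since each piece is admissible. Local Lipschitz continuity gives uniqueness of the solution, so the concatenated trajectory really does follow the orbit after $t_1$.

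\emph{Energy accounting.} The two transient pieces have total duration at most $\bar T+\Tp$. By the power bounds~\eqref{eq:p_bound}, their contributions to $E_1$, $E_{1,+}$ and $E_2$ are each bounded in absolute value by a constant $C=(p_1+p_2)(\bar T+\Tp)$, which does not depend on $\Tf$. This gives
\[
E_2\ge N(p\Tp+\delta)-C ,\qquad |E_1-NE_{1,\mathrm p}^\star|\le C,\qquad |E_{1,+}-NE_{1,+,\mathrm p}^\star|\le C .
\]

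\emph{Main obstacle: the extraction constraint~\eqref{eq:extraction_requirement}.} We need $E_2\ge p\Tf$. Since $\Tf\le t_1+(N+1)\Tp$, it suffices that $N\delta\ge C+p(\bar T+\Tp)$. This holds for all $N$ large enough, i.e.\ for all $\Tf$ large, and it is exactly where the strict slack $\delta>0$ is used. The same bound makes $E_1$ and $E_{1,+}$ positive for large $N$, because $E_{1,\mathrm p}^\star\ge E_{2,\mathrm p}^\star>0$ by passivity along a cycle (as in Lemma~\ref{lem:ss_implies_clc_feasible}), so both ratios are well defined.

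\emph{Limit.} Dividing numerator and denominator by $N$ gives
\[
\phi^\star(x_0,\Tf)\ \ge\ \frac{E_{2,\mathrm p}^\star - C/N}{E_{1(,+),\mathrm p}^\star + C/N}.
\]
The right-hand side tends to $E_{2,\mathrm p}^\star/E_{1(,+),\mathrm p}^\star=\phi_{\mathrm p}(\Tp)$ as $\Tf\to\infty$, since $N\to\infty$. Taking the $\liminf$ gives the claim, for $\phi=\eta$ and $\phi=\eta_+$ separately, with the orbit $\mathbb O(\Tp)$ chosen for the respective $\phi$.
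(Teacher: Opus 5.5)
Your proposal is correct and is essentially the paper's proof. The paper uses the same concatenation of a start-up transient, $\lfloor(\Tf-t_1)/\Tp\rfloor$ copies of the optimal cycle and one partial cycle; the slack $\delta$ secures \eqref{eq:extraction_requirement} for large horizons, and the power bounds \eqref{eq:p_bound} make the transient contributions vanish after dividing by the number of cycles. Your time shift for an arbitrary entry point $x_{\mathrm p}^\star(s)\in\mathbb O(\Tp)$ and your check that $E_1,E_{1,+}>0$ are valid refinements of steps the paper glosses over; your appeal to uniqueness is not backed by the standing hypotheses ($H$ is only $C^1$), but it is unnecessary, since the concatenated state pieces themselves form an admissible trajectory.
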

\begin{pf}
By Assumption~\ref{ass:reachability}, there exists an  input
$u^a(\cdot)$ on $[0,t_1]$, with $t_1\le\bar T$, steering the state
from $x_0$ to $x_{\mathrm p}^\star(0)\in\mathbb O(\Tp)$.
For any $\Tf\ge t_1+\Tp$, define
\[
\ell\doteq
\left\lfloor
\frac{\Tf-t_1}{\Tp}
\right\rfloor,
\qquad
t_r\doteq
\Tf-t_1-\ell\Tp\in[0,\Tp),
\]
where $\lfloor\cdot\rfloor$ denotes the floor function.

Construct a feasible input $\tilde u(\cdot)$ by concatenating
(a) the transition input $u^a(\cdot)$,
 (b) $\ell$ repetitions of the optimal cyclic input $u_{\mathrm p}^\star(\cdot)$,
and (c) one final partial cycle of duration $t_r$.

Let $E_a$, $E_b$, and $E_c$ denote the contributions to the extracted energy $E_2$
during the three phases.
By~\eqref{eq:p_bound},
\[
E_a\ge-p_2\bar T,
\qquad
E_c\ge-p_2\Tp,
\]
and therefore
\[
E_2(x_0,\tilde u)
=
E_a+E_b+E_c
\ge
E_b-(\Tp+\bar T)p_2.
\]
Since the periodic phase consists of $\ell$ complete cycles,
\[
E_b
=
\ell E_{2,\mathrm p}^\star
=
\ell(p\Tp+\delta),
\]
which yields
\[
E_2(x_0,\tilde u)
\ge
\ell(p\Tp+\delta)
-(\Tp+\bar T)p_2.
\]
Moreover, using $\Tf=t_1+\ell\Tp+t_r$, we obtain
\[
E_2(x_0,\tilde u)-p\Tf
\ge
\ell\delta
-p(t_1+t_r)
-(\Tp+\bar T)p_2.
\]
Hence, if
\[
\ell
\ge
\frac{p(t_1+t_r)+(\Tp+\bar T)p_2}{\delta},
\]
then
\[
E_2(x_0,\tilde u)\ge p\Tf,
\]
i.e., the extraction constraint~\eqref{eq:extraction_requirement} is satisfied for sufficient large horizons.

For the injected energy, we analogously obtain
\begin{align*}
    E_1(x_0,\tilde u)
\le
\frac{1}{\eta_{\mathrm p}(\Tp)}
\ell E_{2,\mathrm p}^\star
+
(\Tp+\bar T)p_1, \text{ for } \phi = \eta,\\
    E_{1,+}(x_0,\tilde u)
\le
\frac{1}{\eta_{+,\mathrm p}(\Tp)}
\ell E_{2,\mathrm p}^\star
+
(\Tp+\bar T)p_1, \text{ for } \phi = {\eta_{+}}
\end{align*}
where the first term is the injected energy during the $\ell$ complete periodic cycles, and the second term bounds the contribution of the initial transition and the final partial cycle using~\eqref{eq:p_bound}.

Since $\phi^\star(x_0,\Tf)$ is optimal,
\[
\phi^\star(x_0,\Tf) \geq 
\frac{
\ell E_{2,\mathrm p}^\star
-(\Tp+\bar T)p_2
}{
\dfrac{1}{\phi_{\mathrm p}(\Tp)}
\ell E_{2,\mathrm p}^\star
+
(\Tp+\bar T)p_1
}.
\]
Finally, since
$\Tf=t_1+\ell\Tp+t_r$ with
$t_1\le\bar T$ and $t_r<\Tp$, we have
$\ell\to\infty$ as $\Tf\to\infty$, and therefore
\[
\liminf_{\Tf\to\infty}
\phi^\star(x_0,\Tf)
\ge
\phi_{\mathrm p}(\Tp). \tag*{\End}
\]
\end{pf}
The result shows that periodic operation provides an asymptotic lower bound on achievable performance, as the contribution of transient phases becomes negligible for large horizons.

\subsection{Asymptotic upper bound}

We next derive an upper bound on the asymptotic performance of 
$\mathrm{OCP}$~\eqref{eq:OCP_Efficiency}. To this end, for $\phi \in \{\eta, {\eta_{+}}\}$, we introduce the optimal cyclic
efficiency 
\begin{align}
\phi^{\star}_\text p
\doteq \sup_{\substack{\Tp>0}} \phi_\text p(\Tp). \label{eq:OCP_clc_Eff_sup}
\end{align}

The comparison between the finite-horizon
OCP~\eqref{eq:OCP_Efficiency} and the cyclic benchmark
OCP~\eqref{eq:OCP_clc_Eff_sup} requires that optimal finite-horizon
trajectories can be extended to feasible cyclic trajectories.
To this end, we impose the following reachability condition from
optimal terminal states to the initial state.

\begin{assumption}[Reachability of the initial state]
\label{ass:returnability_opt_terminal}
For every $x_0\in\mathbb X_0$ and $\Tf>0$, let
$x_\mathrm f^\star\in\mathbb X$ denote the terminal state of an optimal
solution of OCP~\eqref{eq:OCP_Efficiency}. There exists a constant
$\bar T>0$ such that, for some $\tau_\mathrm f\in[0,\bar T]$, there exists an
input trajectory $u_\mathrm r \in L^1(0,\tau_\mathrm f;\mathbb U)$, whose corresponding state trajectory $x_\mathrm r(\cdot)$
satisfies
\begin{subequations}\label{eq:return}
\begin{align}
&x_\mathrm r(0)=x_\mathrm f^\star,\,\,
x_\mathrm r(\tau_ \mathrm f)=x_0,\,\,
x_\mathrm r(t)\in\mathbb X , \forall\,t\in[0,\tau_\mathrm f],\\
&-\int_0^{\Tf}u_2^{\star\top}(t)y_2^\star(t)\,\mathrm{d}t
-\int_0^{\tau_f}u_{\mathrm r,2}^\top(t)y_{\mathrm r,2}(t)\,\mathrm{d}t
\geq (\Tf+\tau_\mathrm f)p .
\end{align}
\end{subequations}
\End
\end{assumption}

Assumption~\ref{ass:returnability_opt_terminal} requires that the
initial state can be reached from every optimal terminal state within
uniformly bounded time while preserving the extraction constraint.
Hence, the optimal finite-horizon trajectory can be extended to a
feasible cyclic trajectory. The assumption can alternatively be
enforced through suitable terminal constraints in
OCP~\eqref{eq:OCP_Efficiency}, e.g., by restricting the terminal state
to a set from which $x_0$ is reachable.

Consequently, the additional cost associated with steering the terminal state back to the initial condition remains uniformly bounded and therefore becomes negligible for sufficiently long horizons. This permits a comparison between finite-horizon optimal trajectories and the cyclic benchmark~\eqref{eq:OCP_clc_Eff_sup}, and enables us to derive an asymptotic upper bound complementary to Theorem~\ref{thm:asymptotic_lower}.

\begin{theorem}[Asymptotic upper bound]
\label{thm:asymptotic_upper}
Let Assumption~\ref{ass:returnability_opt_terminal} hold. Then, for every
$x_0\in\mathbb X_0$,
\[
\limsup_{\Tf\to\infty}\phi^\star(x_0,\Tf)
\le
\phi_{\mathrm p}^\star,
\qquad
\text{for }\phi\in\{\eta,{\eta_{+}}\}.
\tag*{\End}
\]
\end{theorem}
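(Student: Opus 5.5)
We extend each optimal finite-horizon trajectory to a feasible cyclic trajectory of period $\Tp=\Tf+\tau_\mathrm f$, then compare efficiencies. Fix $x_0\in\mathbb X_0$ and $\phi\in\{\eta,\eta_+\}$. For large $\Tf$, in particular $\Tf>S_a(x_0)/p$, Proposition~\ref{pro:nE_Sa} guarantees $E_1>0$ along any feasible trajectory, so the efficiencies are well defined. Let $(u^\star,x^\star)$ be an optimal solution of OCP~\eqref{eq:OCP_Efficiency}; it exists by Proposition~\ref{prop:existence_optimal_controls}. Write $E_1^\star$, $E_{1,+}^\star$, $E_2^\star$ for its port energies, so that $\phi^\star(x_0,\Tf)=E_2^\star/E_1^\star$ or $E_2^\star/E_{1,+}^\star$.

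By Assumption~\ref{ass:returnability_opt_terminal}, there is a return input $u_\mathrm r$ on $[0,\tau_\mathrm f]$ with $\tau_\mathrm f\le\bar T$. It steers $x^\star(\Tf)$ back to $x_0$ within $\mathbb X$ and $\mathbb U$, and the extraction constraint holds for the total horizon $\Tf+\tau_\mathrm f$. Concatenating $u^\star$ and $u_\mathrm r$ gives an input $\hat u$ on $[0,\Tp]$ with $x(0)=x(\Tp)=x_0\in\mathbb X$. The state and input constraints hold pointwise, and~\eqref{eq:extraction_requirement} holds with $\Tp$ by~\eqref{eq:return}. Hence $(\hat u,x_0)$ is feasible for OCP~\eqref{eq:OCP_clc_Eff} with period $\Tp$. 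Consequently $\phi(x_0,\hat u)\le\phi_\mathrm p(\Tp)\le\phi_\mathrm p^\star$ by~\eqref{eq:OCP_clc_Eff_sup}.

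It remains to relate $\phi(x_0,\hat u)$ to $\phi^\star(x_0,\Tf)$ using the uniform power bounds~\eqref{eq:p_bound}. Let $E_{2,\mathrm r}$ denote the energy extracted during the return phase. Then
\[
E_2(x_0,\hat u)=E_2^\star+E_{2,\mathrm r}\ge E_2^\star-p_2\bar T,
\]
and
\[
E_1(x_0,\hat u)\le E_1^\star+p_1\bar T,\qquad E_{1,+}(x_0,\hat u)\le E_{1,+}^\star+p_1\bar T .
\]
Dividing, with $D^\star\in\{E_1^\star,E_{1,+}^\star\}$ the relevant denominator, gives
\[
\phi_\mathrm p^\star\ \ge\ \phi(x_0,\hat u)\ \ge\ \frac{E_2^\star-p_2\bar T}{D^\star+p_1\bar T}
=\frac{\phi^\star(x_0,\Tf)-p_2\bar T/D^\star}{1+p_1\bar T/D^\star}.
\]
The numerator $E_2^\star-p_2\bar T$ is positive once $p\Tf>p_2\bar T$. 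From the proof of Proposition~\ref{pro:nE_Sa}, $D^\star\ge E_1^\star\ge p\Tf-S_a(x_0)\to\infty$. Moreover, $\phi^\star(x_0,\Tf)$ stays bounded by~\eqref{eq:eta_true_upperbound}. Rearranging therefore yields
\[
\phi^\star(x_0,\Tf)\le\phi_\mathrm p^\star\bigl(1+p_1\bar T/D^\star\bigr)+p_2\bar T/D^\star ,
\]
where $\phi_\mathrm p^\star\le1$ by Lemma~\ref{lem:ss_implies_clc_feasible}, or more simply because periodic efficiencies are at most one by the dissipation inequality. Taking $\limsup_{\Tf\to\infty}$ sends the correction terms to zero and gives the claim.

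The main obstacle is the bookkeeping in the last step. We must make sure the denominators grow linearly in $\Tf$, which uses the lower bound $E_1^\star\ge p\Tf-S_a(x_0)$ from Proposition~\ref{pro:nE_Sa}. We must also check that the $O(\bar T)$ perturbations affect numerator and denominator only additively and uniformly in $\Tf$. A smaller point is that $\tau_\mathrm f$ may vary with $\Tf$; this is harmless because only $\tau_\mathrm f\le\bar T$ is used, and the bound on $\phi_\mathrm p(\Tp)$ is taken through the supremum $\phi_\mathrm p^\star$.
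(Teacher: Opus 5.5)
Your proposal is correct and follows essentially the same route as the paper: extend the optimal finite-horizon trajectory by the return phase of Assumption~\ref{ass:returnability_opt_terminal} to a feasible cycle of period $\Tf+\tau_\mathrm f$, bound its efficiency by $\phi_\mathrm p^\star$, and use that the return-phase contributions are $O(\bar T)$ while the denominator grows at least like $p\Tf-S_a(x_0)$. Your explicit one-sided estimate $\phi^\star(x_0,\Tf)\le\phi_\mathrm p^\star(1+p_1\bar T/D^\star)+p_2\bar T/D^\star$ is a slightly cleaner way to finish than the paper's step, which manipulates a difference of $\limsup$'s.
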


\begin{pf}
Fix $\phi\in\{\eta,{\eta_{+}}\}$, and let
$(x^\star(\cdot),u^\star(\cdot))$ be an optimal solution of 
OCP~\eqref{eq:OCP_Efficiency} on $[0,\Tf]$, with terminal state
$x_\mathrm{f}^\star=x^\star(\Tf)$. By
Assumption~\ref{ass:returnability_opt_terminal}, there exist
$\tau_\mathrm{f}\in[0,\bar T]$ and an  return trajectory
$(x_\mathrm r(\cdot),u_\mathrm r(\cdot))$ satisfying~\eqref{eq:return}.

Concatenate the optimal trajectory and the return trajectory to obtain
$(x_ \mathrm e(\cdot),u_\mathrm e(\cdot))$ on $[0,\Tf+\tau_\mathrm{f}]$. Since
\[
x_\mathrm e(0)=x_0=x_\mathrm e(\Tf+\tau_\mathrm{f}),
\]
the extended trajectory is cyclic. Moreover, by Assumption~\ref{ass:returnability_opt_terminal},
the extended trajectory is feasible for
OCP~\eqref{eq:OCP_clc_Eff_sup}.
By optimality of $\phi_{\mathrm p}^\star$ for 
OCP~\eqref{eq:OCP_clc_Eff_sup},
\begin{equation}~\label{eq:comparison_phi_p}
\phi_\mathrm e(x_0,\Tf)\le\phi_{\mathrm p}^\star.
\end{equation}

Let
\[
\Delta_2
\doteq
-\int_0^{\tau_\mathrm  f}
u_{\mathrm  r,2}^\top(t)y_{\mathrm  r,2}(t)\,\mathrm dt
\]
and define
\[
D_\phi^\star(x_0,\Tf)
\doteq
\begin{cases}
E_1^\star(x_0,\Tf), & \phi=\eta,\\[1mm]
E_{1,+}^\star(x_0,\Tf), & \phi={\eta_{+}},
\end{cases}
\]
together with
\[
\Delta_\phi
\doteq
\begin{cases}
\displaystyle
\int_0^{\tau_\mathrm{f}}
u_{\mathrm r,1}^\top(t)y_{\mathrm r,1}(t)\,\mathrm dt,
& \phi=\eta,\\[3mm]
\displaystyle
\int_0^{\tau_\mathrm{f}}
\bigl[u_{\mathrm  r,1}^\top(t)y_{\mathrm  r,1}(t)\bigr]_+\,\mathrm dt,
& \phi={\eta_{+}}.
\end{cases}
\]
By~\eqref{eq:p_bound} and $\tau_\mathrm f\le\bar T$, the quantities
$\Delta_2$ and $\Delta_\phi$ are uniformly bounded. Hence, the
efficiency of the extended cyclic trajectory is
\[
\phi_\mathrm e(x_0,\Tf)
=
\frac{E_2^\star(x_0,\Tf)+\Delta_2}
     {D_\phi^\star(x_0,\Tf)+\Delta_\phi}.
\]

By Proposition~\ref{pro:nE_Sa},
\[
\lim_{T\to \infty}E_1^\star(x_0,\Tf)
\ge \lim_{t\to \infty}
p\Tf-S_a(x_0)=\infty.
\]
Since $E_{1,+}^\star\ge E_1^\star$, it follows that
\[
\lim_{T\to \infty}D_\phi^\star(x_0,\Tf)=\infty
\qquad
\text{for all }\phi\in\{\eta,{\eta_{+}}\}.
\]
Because $\Delta_2$ and $\Delta_\phi$ are bounded,
\begin{multline*}
\limsup_{\Tf\to\infty}\phi_\mathrm e(x_0,\Tf)- \limsup_{\Tf\to\infty}\phi^\star(x_0,\Tf)\\
=
\limsup_{\Tf\to\infty}
\frac{E_2^\star(x_0,\Tf)} {D_\phi^\star(x_0,\Tf)} -\frac{E_2^\star(x_0,\Tf)+\Delta_2} 
    {D_\phi^\star(x_0,\Tf)+\Delta_\phi} =0
.
\end{multline*}
By combining the above result with~\eqref{eq:comparison_phi_p}, we have
\[
\limsup_{\Tf\to\infty}\phi^\star(x_0,\Tf)
=\limsup_{\Tf\to\infty}\phi_\mathrm e(x_0,\Tf)\le
\phi_{\mathrm p}^\star
\]
\End
\end{pf}

\begin{assumption}[Existence of an optimal period]\label{ass:Tpstar}
~\\
The supremum defining $\phi_{\mathrm p}^\star$ in
OCP~\eqref{eq:OCP_clc_Eff_sup} is attained for some
period $T_{\mathrm p}^\star>0$. If multiple optimal
periods exist, $T_{\mathrm p}^\star$ denotes the smallest
one. \End
\end{assumption}
By combining the lower bound established in 
Theorem~\ref{thm:asymptotic_lower} and the upper bound in 
Theorem~\ref{thm:asymptotic_upper}, we have the following result. 
\begin{corollary}\label{cor:asymptotic_exact}
Let Assumptions~\ref{ass:reachability}--\ref{ass:Tpstar} hold. Then, for any 
$x_0 \in \mathbb{X}_0$, the asymptotic optimal value of 
$\mathrm{OCP}$~\eqref{eq:OCP_Efficiency} exists and satisfies
\[
\lim_{{\Tf} \to \infty} \phi^\star(x_0,\Tf)
=
\phi^{\star}_\text p. \tag*{\End}
\]
\end{corollary}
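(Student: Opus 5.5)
The plan is to sandwich the optimal values between two bounds that coincide. Fix $\phi\in\{\eta,\eta_+\}$ and $x_0\in\mathbb X_0$. By Assumption~\ref{ass:Tpstar}, the supremum in \eqref{eq:OCP_clc_Eff_sup} is attained at a finite period $T_{\mathrm p}^\star>0$, so $\phi_{\mathrm p}(T_{\mathrm p}^\star)=\phi_{\mathrm p}^\star$. Let $(u_{\mathrm p}^\star,x_{\mathrm p}^\star)$ be an optimal cyclic solution of OCP~\eqref{eq:OCP_clc_Eff} with $\Tp=T_{\mathrm p}^\star$, with orbit $\mathbb O(T_{\mathrm p}^\star)$.

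For the lower bound, I would apply Assumption~\ref{ass:reachability} with $\Tp=T_{\mathrm p}^\star$ and invoke Theorem~\ref{thm:asymptotic_lower}. This gives
\[
\liminf_{\Tf\to\infty}\phi^\star(x_0,\Tf)\ge\phi_{\mathrm p}(T_{\mathrm p}^\star)=\phi_{\mathrm p}^\star .
\]
For the upper bound, Assumption~\ref{ass:returnability_opt_terminal} and Theorem~\ref{thm:asymptotic_upper} give
\[
\limsup_{\Tf\to\infty}\phi^\star(x_0,\Tf)\le\phi_{\mathrm p}^\star .
\]
Since $\liminf\le\limsup$, the two bounds force $\liminf=\limsup=\phi_{\mathrm p}^\star$. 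Hence the limit exists and equals $\phi_{\mathrm p}^\star$. This value is finite, lying in $[p/p_1,1]$ by Proposition~\ref{pro:nE_Sa}, so no infinite-limit case has to be excluded.

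The main obstacle is the strict-extraction hypothesis of Theorem~\ref{thm:asymptotic_lower}: the optimal cycle must satisfy $E_{2,\mathrm p}^\star=pT_{\mathrm p}^\star+\delta$ with $\delta>0$, and the corollary's assumptions do not state this.

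\emph{First attempt.} Take an integer $k\ge1$ and consider the $k$-fold repetition of the optimal cycle, which has period $kT_{\mathrm p}^\star$ and the same efficiency. This alone gives no slack, because the extraction surplus over $p\cdot kT_{\mathrm p}^\star$ stays equal to $k$ times the original surplus, which is zero if the constraint is active.

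\emph{Fallback.} Rework the proof of Theorem~\ref{thm:asymptotic_lower} for $\delta=0$. The constraint \eqref{eq:extraction_requirement} then fails only by the uniformly bounded transient deficit $p(t_1+t_r)+(\Tp+\bar T)p_2$. To absorb this deficit, replace a bounded number of the cycles by a feasible trajectory with strictly more extraction, for instance the extended cyclic trajectories from the proof of Theorem~\ref{thm:asymptotic_upper}, which satisfy \eqref{eq:return}. Because these modifications have bounded length, their effect on the efficiency ratio vanishes as $\ell\to\infty$, and the estimate of Theorem~\ref{thm:asymptotic_lower} still yields the $\liminf$ bound.

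\emph{If this cannot be closed.} I would state explicitly that the optimal cycle at $T_{\mathrm p}^\star$ is assumed to satisfy \eqref{eq:periodic_constraint} strictly, as in Theorem~\ref{thm:asymptotic_lower}, and read that requirement as implicit in the corollary.
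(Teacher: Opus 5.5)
Your sandwich argument is the paper's proof: the paper only combines Theorem~\ref{thm:asymptotic_lower} at $\Tp=T_{\mathrm p}^\star$ (where $\phi_{\mathrm p}(T_{\mathrm p}^\star)=\phi_{\mathrm p}^\star$ by Assumption~\ref{ass:Tpstar}) with Theorem~\ref{thm:asymptotic_upper}. You are right that the strict-extraction hypothesis $\delta>0$ of Theorem~\ref{thm:asymptotic_lower} is not among the corollary's assumptions. The paper does not address this and tacitly inherits the hypothesis, so your last option, stating it explicitly, is the correct reading.

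Your fallback would not close the $\delta=0$ case. The extended cycles from the proof of Theorem~\ref{thm:asymptotic_upper} only satisfy \eqref{eq:return}, i.e.\ extraction at least $p(\Tf+\tau_{\mathrm f})$. This guarantees no surplus over $p$ times their length, so splicing any number of them in cannot absorb the positive transient deficit. Moreover, they start and end at $x_0$, so inserting them in place of cycles on $\mathbb O(T_{\mathrm p}^\star)$ would need maneuvers between the orbit and $x_0$ that no assumption provides. Covering $\delta=0$ would need a genuinely additional hypothesis, e.g.\ a feasible segment with strict extraction surplus that can be connected to $x_0$ and to the optimal orbit.
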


The preceding results are not specific to the port-Hamiltonian
representation. In fact, they extend directly to general dissipative
systems whenever the supply rate admits a two-port decomposition, as
summarized in the following corollary.

\begin{corollary}[Extension to two-port dissipative systems]
Consider a dynamical system that is dissipative with respect to the
separable supply rate
\[
    s(u,y)=s_1(u_1,y_1)+s_2(u_2,y_2),
\]
and define
\begin{align*}
    E_1(x_0,u(\cdot))
    &\doteq \int_0^T s_1(u_1(t),y_1(t))\,\mathrm{d}t,\\
    E_2(x_0,u(\cdot))
    &\doteq -\int_0^T s_2(u_2(t),y_2(t))\,\mathrm{d}t,\\
    E_{1,+}(x_0,u(\cdot))
    &\doteq \int_0^T [s_1(u_1(t),y_1(t))]_+\,\mathrm{d}t.
\end{align*}
Suppose that $s_1$ and $s_2$ are uniformly bounded along all feasible
trajectories and that the corresponding assumptions of
Prop.~\ref{pro:nE_Sa}, Lem.~\ref{lem:ss_implies_clc_feasible},
Thm.~\ref{thm:asymptotic_lower}--\ref{thm:asymptotic_upper}, and
Cor.~\ref{cor:asymptotic_exact} hold. Then the conclusions of these
results remain valid with the above definitions of
$E_1$, $E_2$, and $E_{1,+}$.
\end{corollary}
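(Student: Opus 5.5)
The plan is to re-run each earlier proof and check that it only uses two ingredients: the dissipation inequality and uniform bounds on the port powers. None of them uses the port-Hamiltonian structure otherwise. So we replace $u_1^\top y_1$ by $s_1(u_1,y_1)$ and $u_2^\top y_2$ by $s_2(u_2,y_2)$ throughout.

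\textbf{Step 1: the common ingredients.} Dissipativity supplies a storage function $V\ge 0$ with
\[
V(x(T))-V(x(0))\le\int_0^T s_1+s_2\,\mathrm dt .
\]
We assume, as part of ``the corresponding assumptions'', that $V$ is bounded on $\mathbb X$; this replaces \eqref{eq:barH}. The uniform boundedness hypothesis on $s_1,s_2$ defines constants $p_1,p_2$ playing the role of \eqref{eq:p_bound}. The available storage \eqref{eq:avaiable_storage} is redefined with supply $-(s_1+s_2)$. The standard result of \citep{Willems1972} then gives $0\le S_a(x_0)\le V(x_0)$, so $S_a$ is bounded.

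\textbf{Step 2: Proposition~\ref{pro:nE_Sa} and Lemma~\ref{lem:ss_implies_clc_feasible}.} For Proposition~\ref{pro:nE_Sa}, the proof used only three facts:
\begin{itemize}
\item $-E_1+E_2\le S_a(x_0)$, which holds by definition of $S_a$;
\item $E_{1,+}\le p_1T$;
\item the monotonicity argument, which is purely algebraic.
\end{itemize}
So it carries over verbatim. For Lemma~\ref{lem:ss_implies_clc_feasible}, at a steady state the dissipation inequality over any interval gives $\bar s_1+\bar s_2\ge 0$, since $V$ is constant along the trajectory; this yields $\bar s_1\ge p>0$. On a periodic trajectory $V(x(\Tp))=V(x(0))$, so $E_2\le E_1\le E_{1,+}$. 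The remaining steps are identical.

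\textbf{Step 3: Theorems~\ref{thm:asymptotic_lower}--\ref{thm:asymptotic_upper} and Corollary~\ref{cor:asymptotic_exact}.} These proofs use three things: concatenating trajectories, additivity of the integrals $E_1$, $E_2$, $E_{1,+}$ over sub-intervals, and the bounds $p_1,p_2$ on transient phases. Each holds for a general time-invariant system whose supply integrals are additive. The divergence $D_\phi^\star\to\infty$ in Theorem~\ref{thm:asymptotic_upper} came from Proposition~\ref{pro:nE_Sa}, which is already established in Step 2. Corollary~\ref{cor:asymptotic_exact} follows by combining the two theorems.

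\textbf{Main obstacle.} The delicate point is making sure time-invariance and concatenation of admissible inputs hold in the general class. Concatenation needs measurable inputs with $\mathbb U$-valued pieces, and time-invariance is required so that repeated cycles yield identical energies. We will state both explicitly as part of ``the corresponding assumptions'', together with boundedness of $V$ on $\mathbb X$.
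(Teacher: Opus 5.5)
Your proposal is correct and follows essentially the same reasoning as the paper, which gives no separate proof beyond noting that the earlier arguments do not depend on the port-Hamiltonian structure. What they actually use is the dissipation inequality (finiteness of $S_a$, $\bar s_1+\bar s_2\ge 0$ at steady state, $E_2\le E_1$ on cycles), uniform bounds on the port powers, and concatenation of trajectories, and your step-by-step check confirms exactly this; making time-invariance, concatenability and finiteness of the storage function explicit is a sensible reading of the paper's ``corresponding assumptions''.
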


While the results above thus hold for general two-port dissipative
systems, the port-Hamiltonian representation provides additional
structural insight into the achievable asymptotic conversion
efficiency. In particular, if $R=0$, i.e., the system is lossless,
every feasible cyclic trajectory satisfies $\eta=1$. Hence, the
existence of a feasible cyclic operation implies
\[
    \eta_{\mathrm p}^\star=1.
\]
Consequently, by Cor.~\ref{cor:asymptotic_exact},
\[
    \lim_{\Tf\to\infty}\eta^\star(x_0,\Tf)=1.
\]

The efficiency $\eta$ therefore does not reveal fundamental
limitations of lossless energy conversion. In contrast, $\eta_+$
normalizes the extracted energy only by the positive injected energy.
Thus, whenever $u_1^\top(t)y_1(t)<0$ over parts of a cycle,
$\eta_+$ may admit nontrivial upper bounds strictly below one even in
the lossless case. The next section exploits the port-Hamiltonian
structure to derive conditions under which such bounds admit a
generalized Carnot-type interpretation.

\section{Intrinsic conversion limits in block-diagonal form}\label{sec:limits}

We now investigate intrinsic limitations on the achievable energy
conversion imposed by the port-Hamiltonian system structure. To this
end, we consider the following class of two-port port-Hamiltonian
systems with block-diagonal interconnection structure:
\begin{subequations}\label{eq:PH_block}
\begin{align}
\dot x_1 &=  \left(J_1(x)-R_1(x)\right)H_{x_1}(x) + G_1(x) u_1,\\
\dot x_2& =  \left(J_2(x)-R_2(x)\right)H_{x_2}(x) + G_2(x) u_2,\\
y_1 &= G_1^\top(x) H_{x_1}(x),\\
y_2 &= G_2^\top(x) H_{x_2}(x),
\end{align}
\end{subequations}
where
\(H_{x_1}(x)\doteq \nabla_{x_1}H(x)\) and
\(H_{x_2}(x)\doteq \nabla_{x_2}H(x).\)
This class of systems has been studied by~\cite{Schaft2021,Schaft2022},
where block-diagonal port-Hamiltonian structures are shown to
characterize a broad class of energy-conversion systems and to imply
one-port cyclo-passivity properties. In contrast to interconnected
multi-domain systems, where distinct components exchange energy through
interconnection ports, the present class describes a single
energy-conversion component in which storage variables from different
physical domains are coupled through a common Hamiltonian.
Representative examples include electromechanical actuators,
thermo-mechanical systems such as gas--piston systems, and
electro-thermal devices; see \citep{Schaft2021,Schaft2022}. In contrast, systems formed by the
interconnection of separate electrical, mechanical, or thermal
subsystems through transformers, gyrators, or more general Dirac
interconnections do not belong to this class.

As illustrated in Fig.~\ref{fig:2port_diagnoal}, the two external ports
are geometrically decoupled in the interconnection structure but interact
through the common Hamiltonian $H(x_1,x_2)$.
Consequently, the storage variation associated with $x_1$ and $x_2$ can
be related to the corresponding port power flows as follows.

\begin{figure}
    \centering
    \includegraphics[height=0.4\linewidth]{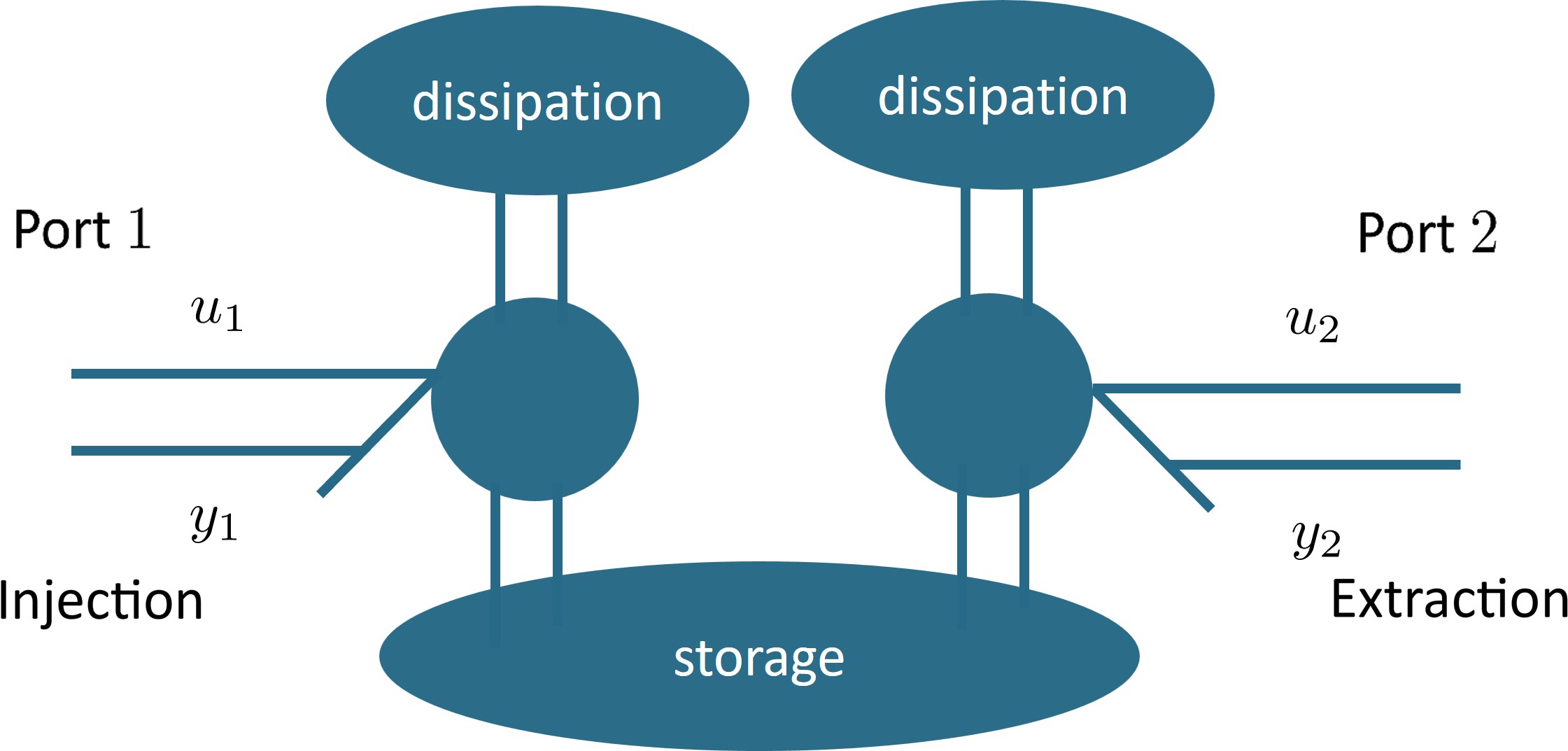}
    \caption{Two-port pH system in block-diagonal form}
    \label{fig:2port_diagnoal}
\end{figure}

\begin{lemma}[Subsystem power inequalities]\label{lem:power_bounds}~\\
Consider the block-diagonal port-Hamiltonian system~\eqref{eq:PH_block}.
Along any system trajectory, we have:
\begin{subequations}\label{eq:upperbound}
\begin{align}
H_{x_1}^\top(x)\dot x_1 &\le y_1^\top u_1, \label{eq:u1y1}\\
H_{x_2}^\top(x)\dot x_2 &\le y_2^\top u_2. \label{eq:u2y2}
\end{align}
\end{subequations} 
\end{lemma}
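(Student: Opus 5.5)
The plan is a direct pointwise computation: substitute the first block of the dynamics \eqref{eq:PH_block} into $H_{x_1}^\top(x)\dot x_1$ and use the structural properties of $J_1$ and $R_1$. The block-diagonal structure makes the computation identical for both ports, so it suffices to carry it out for \eqref{eq:u1y1} and then note that \eqref{eq:u2y2} follows by exchanging the indices $1\leftrightarrow 2$.

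\textbf{Step 1: inherited matrix properties.} I first record that $J_1(x)$ is skew-symmetric and $R_1(x)$ is symmetric positive semidefinite. This holds because \eqref{eq:PH_block} is a special case of \eqref{eq:PH} with block-diagonal $J=\mathrm{diag}(J_1,J_2)$ and $R=\mathrm{diag}(R_1,R_2)$. Skew-symmetry of $J$ passes to its diagonal blocks, and positive semidefiniteness of $R$ passes to its principal submatrices. The latter is seen by testing $R$ with vectors of the form $(v_1,0)$.

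\textbf{Step 2: pointwise computation.} At almost every $t$ (where $x$ is differentiable, since $x\in W^{1,1}$), write $z\doteq H_{x_1}(x(t))$. Then
\[
H_{x_1}^\top\dot x_1 = z^\top J_1 z - z^\top R_1 z + z^\top G_1 u_1 .
\]
The first term vanishes by skew-symmetry, and the second term is nonpositive. The last term equals $(G_1^\top H_{x_1})^\top u_1 = y_1^\top u_1$. This gives \eqref{eq:u1y1}, and \eqref{eq:u2y2} follows verbatim.

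\textbf{Expected obstacle.} The only real point of care is Step 1: \eqref{eq:PH_block} must be read so that $J_i$ and $R_i$ inherit the skew and semidefinite properties. This is implicit in calling \eqref{eq:PH_block} port-Hamiltonian, and I would state it explicitly as part of the class definition. The inequalities hold almost everywhere along trajectories, which is sufficient for later integration.
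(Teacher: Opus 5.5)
Your proposal is correct and follows the paper's proof essentially verbatim: substitute each block of \eqref{eq:PH_block}, drop the $J_i$ term by skew-symmetry and the $R_i$ term by positive semidefiniteness, and identify $H_{x_i}^\top G_i u_i = y_i^\top u_i$. Your Step~1 (that $J_i$ and $R_i$ inherit these properties as diagonal blocks of $J$ and $R$ in \eqref{eq:PH}) and your almost-everywhere remark are small clarifications that the paper leaves implicit.
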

\begin{pf}
    Multiplying the $x_1$- and $x_2$-subsystems of \eqref{eq:PH_block} by $H_{x_1}^\top$ and $H_{x_2}^\top$,
respectively, and using the skew-symmetry of $J_i$ and positive semidefiniteness of $R_i$,
yields
\begin{subequations}\label{eq:passivity_subsystems}
\begin{align}
H_{x_1}^\top\dot x_1
&=H_{x_1}^\top\big(J_1-R_1\big)H_{x_1} + H_{x_1}^\top G_1u_1 \nonumber\\
&=
y_1^\top u_1 - H_{x_1}^\top R_1H_{x_1}
\ \le\ y_1^\top u_1, \\[1mm]
H_{x_2}^\top\dot x_2
&=H_{x_2}^\top\big(J_2(x)-R_2(x)\big)H_{x_2} + H_{x_2}^\top G_2u_2 \nonumber\\
&=
y_2^\top u_2 - H_{x_2}^\top(x)R_2(x)H_{x_2}(x)
\ \le\ y_2^\top u_2. 
\end{align} 
\end{subequations}
\End
\end{pf}

The relations in~\eqref{eq:upperbound} connect the storage variations of
the two subsystems with the power flows at the corresponding ports.
Moreover, the block-diagonal form~\eqref{eq:PH_block} imposes structural
limitations on the steady-state energy conversion.
\begin{lemma}[No energy conversion at steady state]\label{lem:limits}~\\
Consider system~\eqref{eq:PH_block}. For any steady-state operating point $(\bar u,\bar x,\bar y)$, the following identities hold:
\begin{subequations}\label{eq:passity_subsystem_ss}
\begin{align}
0 &= \bar y_1^\top \bar u_1 
    - H_{x_1}^\top(\bar x)\, R_1(\bar x)\, H_{x_1}(\bar x), \\
0 &= \bar y_2^\top \bar u_2 
    - H_{x_2}^\top(\bar x)\, R_2(\bar x)\, H_{x_2}(\bar x).
\label{eq:port2_nonneg_int}
\end{align}
\end{subequations}
\end{lemma}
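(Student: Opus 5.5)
The plan is to specialize the subsystem identities from the proof of Lemma~\ref{lem:power_bounds} to a steady state. Those identities are the equalities in \eqref{eq:passivity_subsystems}, not merely the inequalities \eqref{eq:upperbound}. The key observation is that at a steady-state operating point $(\bar u,\bar x,\bar y)$ of \eqref{eq:PH_block}, the state is constant. Hence both block derivatives vanish separately: $\dot x_1=0$ and $\dot x_2=0$. It is not only the total derivative $\dot x=0$ that vanishes. This holds because the block-diagonal form lets the equilibrium condition $(J(\bar x)-R(\bar x))H_x(\bar x)+B(\bar x)\bar u=0$ split row-wise into
\[
\big(J_i(\bar x)-R_i(\bar x)\big)H_{x_i}(\bar x)+G_i(\bar x)\bar u_i=0,\qquad i=1,2 .
\]

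Next, for each $i\in\{1,2\}$, I premultiply the $i$-th block equation by $H_{x_i}^\top(\bar x)$. Skew-symmetry of $J_i(\bar x)$ gives $H_{x_i}^\top J_i H_{x_i}=0$. The output definition gives $H_{x_i}^\top(\bar x) G_i(\bar x)\bar u_i=\bar y_i^\top\bar u_i$, with $\bar y_i=G_i^\top(\bar x)H_{x_i}(\bar x)$. Together these yield
\[
0=\bar y_i^\top\bar u_i-H_{x_i}^\top(\bar x)R_i(\bar x)H_{x_i}(\bar x),
\]
which is exactly \eqref{eq:passity_subsystem_ss}. The same conclusion can be read directly from the first lines of \eqref{eq:passivity_subsystems} by setting $\dot x_i=0$, because those lines are equalities before the dissipation term is dropped.

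The only point needing care is the justification that each block derivative vanishes individually. This is immediate from the block structure, since $\dot x=(\dot x_1,\dot x_2)$ and a constant trajectory has both components constant. I would also state the consequence that motivates the lemma's title. Since $R_2\succeq0$, we get $\bar y_2^\top\bar u_2\ge 0$, so the extracted power $-\bar u_2^\top\bar y_2\le 0$. Therefore the steady-state problem \eqref{eq:steady_state} is infeasible for any $p>0$, and no net energy is converted at steady state.
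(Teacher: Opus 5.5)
Your proposal is correct and follows the paper's proof, which simply evaluates the equalities in \eqref{eq:passivity_subsystems} at an equilibrium where $\dot x_1=\dot x_2=0$. Your row-wise splitting of the equilibrium condition, the premultiplication by $H_{x_i}^\top(\bar x)$, and the remark that $\bar y_2^\top\bar u_2\ge 0$ rules out feasibility of \eqref{eq:steady_state} for $p>0$ only spell out that step and the paper's subsequent discussion.
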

\begin{pf} Equation
\eqref{eq:passity_subsystem_ss} follows by evaluating~\eqref{eq:passivity_subsystems} at equilibriums. \End
\end{pf}

Lemma~\ref{lem:limits} shows that nontrivial steady-state energy
conversion is impossible. By~\eqref{eq:port2_nonneg_int}, the net power
flow through Port~2 is nonnegative at steady state, implying that no net
energy can be extracted. Hence, nontrivial energy conversion
necessarily requires time-varying operation.

For cyclic operation, the Hamiltonian returns to its initial value after
each cycle, i.e., $H(x(\Tp))-H(x(0))=0$. Consequently, the storage
variations associated with $x_1$ and $x_2$ balance over one cycle.
This relation yields an upper bound on the energy that can be extracted
through Port~2, as stated in the following lemma.
\begin{lemma}[Extracted energy upper bound]\label{lem:p2_upperbound}
Consider system~\eqref{eq:PH_block}. For any cyclic operation over $[0,\Tp]$
satisfying $x(0)=x(\Tp)$, it holds that
\begin{equation}\label{eq:power_upperbound}
\int_0^{\Tp} -u_2^\top y_2 \,\mathrm dt
\;\le\;
\int_0^{\Tp} H_{x_1}^\top(x)\,\dot x_1\,\mathrm d t .
\end{equation}
\end{lemma}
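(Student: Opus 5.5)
The plan is to combine the cyclic energy balance for the Hamiltonian with the Port~2 power inequality of Lemma~\ref{lem:power_bounds}. Along any trajectory of \eqref{eq:PH_block}, $x$ is absolutely continuous and $H$ is $C^1$. Hence $t\mapsto H(x(t))$ is absolutely continuous and the chain rule holds almost everywhere:
\[
\frac{\mathrm d}{\mathrm dt}H(x(t)) = H_{x_1}^\top(x)\dot x_1 + H_{x_2}^\top(x)\dot x_2 \quad \text{for a.e. } t .
\]
Integrating over $[0,\Tp]$ and using $x(0)=x(\Tp)$ gives
\[
0 = H(x(\Tp))-H(x(0)) = \int_0^{\Tp} H_{x_1}^\top\dot x_1\,\mathrm dt + \int_0^{\Tp} H_{x_2}^\top\dot x_2\,\mathrm dt .
\]
Rearranging, the $x_2$-storage variation is the negative of the $x_1$-storage variation:
\[
\int_0^{\Tp} H_{x_2}^\top\dot x_2\,\mathrm dt = -\int_0^{\Tp} H_{x_1}^\top\dot x_1\,\mathrm dt .
\]

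Next I integrate the pointwise inequality \eqref{eq:u2y2}, $H_{x_2}^\top\dot x_2 \le y_2^\top u_2$, over one period. This is legitimate because both sides are integrable: $\dot x\in L^1$ and $H_{x_2}$ is bounded on the compact set $\mathbb X$, while the power bound \eqref{eq:p_bound} makes $u_2^\top y_2$ bounded. Integrating gives
\[
-\int_0^{\Tp} H_{x_1}^\top\dot x_1\,\mathrm dt \le \int_0^{\Tp} u_2^\top y_2\,\mathrm dt .
\]
Multiplying by $-1$ yields exactly \eqref{eq:power_upperbound}.

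The only point needing care is this justification of the chain rule and of integrability for $W^{1,1}$ trajectories; once that is settled, the remaining steps are algebraic. Note also that \eqref{eq:u1y1} is not needed. It would only be used afterwards to further bound the right-hand side by $E_1$.
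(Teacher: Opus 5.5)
Your proof is correct and follows essentially the paper's argument. Both integrate $\frac{\mathrm d}{\mathrm dt}H = H_{x_1}^\top\dot x_1 + H_{x_2}^\top\dot x_2$ over the cycle to replace $\int_0^{\Tp} H_{x_2}^\top\dot x_2\,\mathrm dt$ by $-\int_0^{\Tp} H_{x_1}^\top\dot x_1\,\mathrm dt$, and then use the Port~2 relation from Lemma~\ref{lem:power_bounds}. The only cosmetic difference is that the paper integrates the identity $H_{x_2}^\top\dot x_2 = y_2^\top u_2 - H_{x_2}^\top R_2 H_{x_2}$ and drops the nonnegative dissipation integral, which is equivalent to integrating \eqref{eq:u2y2} directly as you do. Your integrability check also does not need the constraint sets, since continuity of $x$ on $[0,\Tp]$ already gives a compact range.
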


\begin{pf}
Along any trajectory, the Hamiltonian satisfies
\[
\dot H(x)=H_{x_1}^\top(x)\,\dot x_1+H_{x_2}^\top(x)\,\dot x_2.
\]
Integrating over $[0,\Tp]$ and using the cyclic condition $x(0)=x(\Tp)$ yields
\[
0
=
\int_0^{\Tp}\Big(H_{x_1}^\top(x)\,\dot x_1+H_{x_2}^\top(x)\,\dot x_2\Big)\,\mathrm dt.
\]
Hence,
\begin{equation}\label{eq:Hx2dot_relation}
-\int_0^{\Tp} H_{x_2}^\top(x)\,\dot x_2\,\mathrm dt
=
\int_0^{\Tp} H_{x_1}^\top(x)\,\dot x_1\,\mathrm dt.
\end{equation}
On the other hand, by~\eqref{eq:u2y2},
\[
H_{x_2}^\top(x)\,\dot x_2
=
y_2^\top u_2 - H_{x_2}^\top(x)\,R_2(x)\,H_{x_2}(x),
\]
which implies
\[
-y_2^\top u_2
=
- H_{x_2}^\top(x)\,\dot x_2 - H_{x_2}^\top(x)\,R_2(x)\,H_{x_2}(x).
\]
Integrating over $[0,\Tp]$ and using~\eqref{eq:Hx2dot_relation} gives
\begin{align*}
\int_0^{\Tp} -y_2^\top u_2 \,\mathrm dt
&=
\int_0^{\Tp} H_{x_1}^\top(x)\,\dot x_1\,\mathrm dt \\
&\quad - \int_0^{\Tp} H_{x_2}^\top(x)\,R_2(x)\,H_{x_2}(x)\,\mathrm dt.
\end{align*}
Since $R_2(x)\succeq 0$, the second term is nonnegative, which proves~\eqref{eq:power_upperbound}. \End
\end{pf}

The following proposition extends
Lemma~\ref{lem:p2_upperbound} from cyclic trajectories to the
finite-horizon optimal trajectory.

\begin{proposition}
~\label{pro:p2_upper_noncyc}
Consider OCP~\eqref{eq:OCP_Efficiency} with dynamics in the block-diagonal
form~\eqref{eq:PH_block}. Let its optimal trajectory be$(u^\star(\cdot),y^\star(\cdot),x^\star(\cdot))$. Then for all $x_0 \in \mathbb X_0$,
\[
\int_0^{\Tf} -u^{\star\top}_2 y^\star_2 \,\mathrm d t
\le
\int_0^{\Tf} H_{x_1}^\top(x^\star)\,\dot x_1^\star\,\mathrm dt + \bar H .
\]    
\end{proposition}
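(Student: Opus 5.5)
The plan is to repeat the argument of Lemma~\ref{lem:p2_upperbound}, but without the cyclic condition $x(0)=x(\Tf)$. The terminal mismatch $H(x^\star(\Tf))-H(x_0)$ no longer vanishes, so it will be bounded using the uniform bound~\eqref{eq:barH}. Nothing about optimality is used; the inequality holds for every feasible trajectory of OCP~\eqref{eq:OCP_Efficiency} with dynamics~\eqref{eq:PH_block}, in particular the optimal one. Feasibility matters only because it guarantees $x^\star(t)\in\mathbb X$ for all $t\in[0,\Tf]$, so~\eqref{eq:barH} applies at both endpoints.

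The first step is to use the chain rule along the absolutely continuous trajectory $x^\star\in W^{1,1}$ with $H\in C^1$:
\[
\dot H(x^\star)=H_{x_1}^\top(x^\star)\dot x_1^\star+H_{x_2}^\top(x^\star)\dot x_2^\star \quad\text{a.e.}
\]
Integrating over $[0,\Tf]$ gives
\[
-\int_0^{\Tf}H_{x_2}^\top\dot x_2^\star\,\mathrm dt=\int_0^{\Tf}H_{x_1}^\top\dot x_1^\star\,\mathrm dt+H(x_0)-H(x^\star(\Tf)).
\]

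The second step uses the $x_2$-subsystem identity in~\eqref{eq:passivity_subsystems}:
\[
-u_2^{\star\top}y_2^\star=-H_{x_2}^\top\dot x_2^\star-H_{x_2}^\top R_2H_{x_2}\le -H_{x_2}^\top\dot x_2^\star,
\]
where the inequality holds because $R_2\succeq0$. Integrating this and inserting the identity from the first step yields
\[
\int_0^{\Tf}-u_2^{\star\top}y_2^\star\,\mathrm dt\le\int_0^{\Tf}H_{x_1}^\top\dot x_1^\star\,\mathrm dt+H(x_0)-H(x^\star(\Tf)).
\]

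Finally, since $x_0,x^\star(\Tf)\in\mathbb X$, \eqref{eq:barH} gives $H(x_0)\le\bar H$ and $H(x^\star(\Tf))\ge0$, so $H(x_0)-H(x^\star(\Tf))\le\bar H$, which is the claim. The only delicate point is justifying the chain rule and the integrability of $H_{x_1}^\top\dot x_1^\star$. This follows because $H_x$ is continuous on the compact set $\mathbb X$ and $\dot x^\star\in L^1$. I expect no real obstacle beyond this.
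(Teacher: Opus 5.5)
Your proposal is correct and follows essentially the same route as the paper: integrate the Hamiltonian balance without the cyclic condition, and bound the endpoint mismatch $H(x^\star(0))-H(x^\star(\Tf))$ by $\bar H$ via~\eqref{eq:barH}. You then reuse the $x_2$-subsystem identity with $R_2\succeq 0$ exactly as in Lemma~\ref{lem:p2_upperbound}, and your observations that optimality is never used and that the chain rule is justified for $W^{1,1}$ trajectories are accurate details the paper leaves implicit.
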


\begin{pf}
Since the optimal trajectory is not necessarily cyclic, instead of
\eqref{eq:Hx2dot_relation} we obtain from the Hamiltonian balance
\begin{multline*}
-\int_0^{{\Tf}} H_{x_2}^\top(x^\star)\,\dot x_2^\star\, \mathrm  dt
=
\int_0^{{\Tf}} H_{x_1}^\top(x^\star)\,\dot x_1^\star\,\mathrm  dt \\
+ H(x^\star(0)) - H(x^\star({\Tf})).
\end{multline*}
Using the storage bound~\eqref{eq:barH} gives
\[
-\int_0^{{\Tf}} H_{x_2}^\top(x^\star)\,\dot x_2^\star\,\mathrm  dt
\le
\int_0^{{\Tf}} H_{x_1}^\top(x^\star)\,\dot x_1^\star\,\mathrm  dt + \bar H .
\]
The remainder of the argument follows exactly the same steps as in the
proof of Lemma~\ref{lem:p2_upperbound}. \End
\end{pf}

\section{Generalized Carnot Cycle and Carnot efficiency bound}~\label{sec:carnot}
The following result shows that, under an additional structural
restriction on system~\eqref{eq:PH_block}, the achievable efficiency~${\eta_{+}}$ is fundamentally limited by a Carnot-type efficiency bound.

\begin{assumption}[Scalar storage coordinate]
\label{ass:scalar_storage_coordinate}
Let
$
\Pi_1(x)=x_1
$
denote the projection onto the $x_1$-coordinates. Suppose that there exists a continuously differentiable function
$c: \Pi_1(\mathbb X)\to\mathbb R$ and a continuously differentiable function
$\widetilde H$ such that
\[
H(x_1,x_2)=\widetilde H(c(x_1),x_2).
\]
Define
\[
H_c(c,x_2)
\doteq
\nabla_c\widetilde H(c,x_2).
\]
Assume that there exist constants
$\bar H_c\geq\underline H_c\ge 0$ such that
\[
0\le \underline H_c
\le
H_c(c(x_1),x_2)
\le
\overline H_c,
\qquad
\forall x\in\mathbb X. \tag*{\End}
\]
\end{assumption}

Under Assumption~\ref{ass:scalar_storage_coordinate}, one can construct a
generalized Carnot cycle in the $H_c$--$c$ plane by extending the
construction by~\cite{Schaft2021,Schaft2022}. The resulting cycle serves as the reference process for deriving the fundamental upper bound on the maximal achievable efficiency.

As depicted in Figure~\ref{fig:Carnot}, the generalized Carnot cycle
consists of the following four phases:
\begin{itemize}
    \item[a)] Isothermal-like phase on $[0,\tau_1]$: design inputs $u_1$ and $u_2$ such that $H_c=\overline H_c$ and the system evolves from $c(0)=\underline c$ to $c(\tau_1)=\overline c$.
    
    \item[b)] Adiabatic-like phase on $[\tau_1,\tau_2]$: design inputs $u_1$ and $u_2$ such that $c=\overline c$ and the system evolves from $H_c(\tau_1)=\overline H_c$ to $H_c(\tau_2)=\underline H_c$.
    
    \item[c)] Isothermal-like phase on $[\tau_2,\tau_3]$: design inputs $u_1$ and $u_2$ such that $H_c=\underline H_c$ and the system evolves from $c(\tau_2)=\overline c$ to $c(\tau_3)=\underline c$.
    
    \item[d)] Adiabatic-like phase on $[\tau_3,\Tp]$: design inputs $u_1$ and $u_2$ such that $c=\underline c$ and the system evolves from $H_c(\tau_3)=\underline H_c$ to $H_c(\Tp)=\overline H_c$.
\end{itemize}

\begin{figure}
    \centering
    \includegraphics[height=0.4\linewidth]{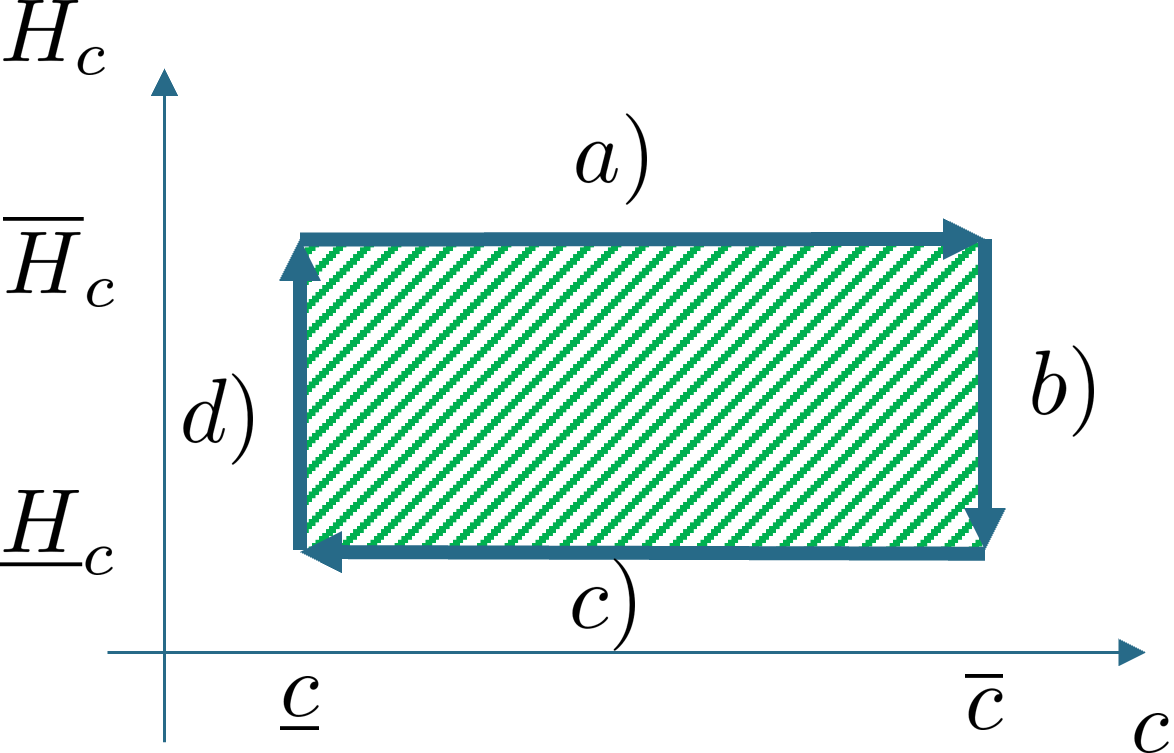}
    \caption{A generalized Carnot cycle in the $H_c$--$c$ plane.}
    \label{fig:Carnot}
\end{figure}

The above construction is formally analogous to the classical
thermodynamic Carnot cycle, where $H_c$ and $c$ play the roles of
generalized co-energy and energy variables analogous to temperature and
entropy, respectively. The isothermal-like phases correspond to
trajectories with constant co-energy variable $H_c$, whereas the
adiabatic-like phases correspond to trajectories with constant storage
coordinate $c$. In the absence of dissipation, the extracted energy of the generalized
Carnot cycle satisfies
\[
\oint u_2^\top y_2\,\mathrm dt
=
\oint u_1^\top y_1\,\mathrm dt
=
\oint H_c\,\mathrm dc
=
(\overline c-\underline c)
(\overline H_c-\underline H_c).
\]

The positive injected energy satisfies
\[
\oint [u_1^\top y_1]_+\,\mathrm dt
=
\oint H_c(t)[\mathrm dc(t)]_+
=
(\overline c-\underline c)\overline H_c.
\]

Taking the ratio of the extracted energy to the positive injected
energy yields the generalized Carnot efficiency~\citep{Struchtrup_2024},
\begin{equation}\label{eq:carnot}
\eta_{\mathrm{carnot}}
\doteq
1-
\frac{\underline H_c}{\overline H_c},
\end{equation}
which reduces to the classical Carnot efficiency when $H_c$ and $c$
represent temperature and entropy, respectively.

The generalized Carnot efficiency characterizes the best achievable
energy-conversion performance under the structural restriction of
Assumption~\ref{ass:scalar_storage_coordinate}. The following theorem shows that no
feasible cyclic trajectory can exceed this limit.

\begin{theorem}[Carnot efficiency bound]
\label{thm:carnot}
Consider $\mathrm{OCP}$~\eqref{eq:OCP_clc_Eff} with dynamics in the
block-diagonal form~\eqref{eq:PH_block} with further structural restriction by Assumption~\ref{ass:scalar_storage_coordinate}. Then, for every
$\Tp\in\mathbb R^+$ and every feasible cyclic trajectory
$(x(\cdot),u(\cdot))$ satisfying
\[
\int_0^{\Tp}[u_1^\top(t)y_1(t)]_+\,\mathrm dt>0,
\]
it holds that
\begin{equation}\label{eq:carnot_inequality}
{\eta_{+}}(x_0,u(\cdot))
\le
\eta_{\mathrm{carnot}}.
\end{equation}
\end{theorem}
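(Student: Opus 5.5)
The plan is to combine the cyclic extraction bound of Lemma~\ref{lem:p2_upperbound} with the subsystem power inequality \eqref{eq:u1y1} of Lemma~\ref{lem:power_bounds}. The scalar structure of Assumption~\ref{ass:scalar_storage_coordinate} then reduces everything to a one-dimensional estimate in the $H_c$--$c$ plane.

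\textbf{Step 1 (chain rule).} Write $c(t)\doteq c(x_1(t))$. Since $c$ is $C^1$ and $x_1$ is absolutely continuous, $c(\cdot)$ is absolutely continuous. From $H(x_1,x_2)=\widetilde H(c(x_1),x_2)$ we get $H_{x_1}(x)=H_c(c(x_1),x_2)\,\nabla c(x_1)$. Hence, almost everywhere,
\[
H_{x_1}^\top(x)\dot x_1 = H_c(t)\,\dot c(t),
\]
where $H_c(t)\doteq H_c(c(x_1(t)),x_2(t))\in[\underline H_c,\overline H_c]$.

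\textbf{Step 2 (bound on extraction).} By Lemma~\ref{lem:p2_upperbound}, $E_2\le \int_0^{\Tp} H_c\dot c\,\mathrm dt$. Split $\dot c=[\dot c]_+-[\dot c]_-$ and set
\[
A\doteq\int_0^{\Tp} H_c[\dot c]_+\,\mathrm dt,\qquad
\Delta\doteq\int_0^{\Tp}[\dot c]_+\,\mathrm dt .
\]
Periodicity $x(0)=x(\Tp)$ gives $c(0)=c(\Tp)$, so $\int_0^{\Tp}[\dot c]_-\,\mathrm dt=\Delta$ as well. Using the bounds on $H_c$ and $H_c\ge 0$, we have $A\le \overline H_c\Delta$ and
\[
\int_0^{\Tp} H_c[\dot c]_-\,\mathrm dt\ge \underline H_c\Delta\ge \frac{\underline H_c}{\overline H_c}A .
\]
If $\overline H_c=0$, then $H_c\equiv 0$, so $E_2\le0$ and the claim is trivial. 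Otherwise the two estimates combine to
\[
E_2\le A-\frac{\underline H_c}{\overline H_c}A=\eta_{\mathrm{carnot}}\,A .
\]

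\textbf{Step 3 (relating $A$ to positive injection).} By \eqref{eq:u1y1}, $u_1^\top y_1\ge H_c\dot c$. On the set where $\dot c>0$ we have $H_c\dot c\ge0$, hence $[u_1^\top y_1]_+\ge H_c[\dot c]_+$ there. Elsewhere $H_c[\dot c]_+=0\le[u_1^\top y_1]_+$. Integrating gives $A\le E_{1,+}$. Since $\eta_{\mathrm{carnot}}\ge0$, we obtain $E_2\le \eta_{\mathrm{carnot}}E_{1,+}$. Dividing by $E_{1,+}>0$ yields \eqref{eq:carnot_inequality}.

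The main delicate point is Step 2. We must pair the expansion part of the cycle with $\overline H_c$ and the compression part with $\underline H_c$, using the equality of total positive and negative variation of $c$ over a cycle. A naive bound of the form $\oint H_c\,\mathrm dc\le(\overline H_c-\underline H_c)\cdot(\text{range of }c)$ would not relate to $E_{1,+}$. One should also check that $[\dot c]_\pm$ are integrable, which holds because $\dot c=\nabla c^\top\dot x_1\in L^1$.
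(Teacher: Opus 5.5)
Your proof is correct and uses the same ingredients as the paper: the chain rule $H_{x_1}^\top\dot x_1=H_c\dot c$, Lemma~\ref{lem:p2_upperbound}, the comparison with $[u_1^\top y_1]_+$ via \eqref{eq:u1y1}, the bounds $\underline H_c\le H_c\le\overline H_c$, and periodicity of $c$ to equate its positive and negative variation. The only difference is that you establish the multiplicative inequality $E_2\le\eta_{\mathrm{carnot}}E_{1,+}$ before dividing, whereas the paper manipulates ratios directly; this spares you from checking that intermediate denominators such as $\int_0^{T_{\mathrm p}}[H_c\dot c]_+\,\mathrm dt$ are positive, and you also explicitly handle the degenerate case $\overline H_c=0$.
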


\begin{pf}
By Assumption~\ref{ass:scalar_storage_coordinate},
\[
H(x_1,x_2)=\widetilde H(c(x_1),x_2).
\]
Hence, the chain rule gives
\[
H_{x_1}(x)
=
H_c(c(x_1),x_2)\nabla_{x_1} c(x_1),
\]
and therefore
\begin{equation}\label{eq:scalar_storage_identity}
H_{x_1}^\top\dot x_1
=
H_c \left(\nabla_{x_1} c(x_1)\right)^\top\dot x_1
=
H_c\dot c.
\end{equation}

Consider any feasible cyclic trajectory of
$\mathrm{OCP}$~\eqref{eq:OCP_clc_Eff}. By
Lemma~\ref{lem:p2_upperbound} and
\eqref{eq:scalar_storage_identity},
\[
-\int_0^{\Tp}u_2^\top(t)y_2(t)\,\mathrm dt
\le
\int_0^{\Tp}H_c(t)\dot c(t)\,\mathrm dt.
\]
Moreover, by~\eqref{eq:u1y1},
\[
H_c(t)\dot c(t)
=
H_{x_1}^\top(x(t))\dot x_1(t)
\le
u_1^\top(t)y_1(t),
\]
and thus
\[
[H_c(t)\dot c(t)]_+
\le
[u_1^\top(t)y_1(t)]_+.
\]
Consequently,
\begin{align*}
{\eta_{+}}(x_0,u(\cdot))
&=
\frac{-\int_0^{\Tp}u_2^\top(t)y_2(t)\,\mathrm dt}
{\int_0^{\Tp}[u_1^\top(t)y_1(t)]_+\,\mathrm dt}
\le
\frac{\int_0^{\Tp}H_c(t)\dot c(t)\,\mathrm dt}
{\int_0^{\Tp}[H_c(t)\dot c(t)]_+\,\mathrm dt}.
\end{align*}
Using $z=[z]_+-[-z]_+$, the right-hand side becomes
\[
1-
\frac{\int_0^{\Tp}[-H_c(t)\dot c(t)]_+\,\mathrm dt}
{\int_0^{\Tp}[H_c(t)\dot c(t)]_+\,\mathrm dt}.
\]
Since $H_c(t)\ge0$, the bounds on $H_c$ imply
\begin{equation}\label{eq:thermal_upper}
{\eta_{+}}(x_0,u(\cdot))
\le
1-
\frac{
\underline H_c
\int_0^{\Tp}[-\dot c(t)]_+\,\mathrm dt}
{
\overline H_c
\int_0^{\Tp}[\dot c(t)]_+\,\mathrm dt}.
\end{equation}

Periodicity  of $x_1$ implies
\[
c(x_1(\Tp))=c(x_1(0)),
\]
and hence
\begin{equation}\label{eq:equality_c}
0
=
\int_0^{\Tp}\dot c(t)\,\mathrm dt
=
\int_0^{\Tp}[\dot c(t)]_+\,\mathrm dt
-
\int_0^{\Tp}[-\dot c(t)]_+\,\mathrm dt.
\end{equation}
Therefore,
\[
\int_0^{\Tp}[-\dot c(t)]_+\,\mathrm dt
=
\int_0^{\Tp}[\dot c(t)]_+\,\mathrm dt.
\]
Substitution into~\eqref{eq:thermal_upper} yields
\[
{\eta_{+}}(x_0,u(\cdot))
\le
1-\frac{\underline H_c}{\overline H_c}
=
{\eta_{+}}_{\mathrm{carnot}}. \tag*{\End}
\]
\end{pf}

Although Theorem~\ref{thm:carnot} applies directly only to cyclic
operation, the same structural argument yields an asymptotic upper bound
for finite-horizon optimal trajectories, which need not be periodic.

\begin{proposition}[Asymptotic Carnot-type upper bound] \hfill
\label{pro:carnot}
Let Assumption~\ref{ass:scalar_storage_coordinate} hold.
Consider OCP~\eqref{eq:OCP_Efficiency} with dynamics in the
block-diagonal form~\eqref{eq:PH_block}. Then, for every
$x_0\in\mathbb X_0$,
\[
\limsup_{\Tf\to\infty}\eta_{+}^\star(x_0,\Tf)
\le
\eta_{\mathrm{carnot}}.
\]
\end{proposition}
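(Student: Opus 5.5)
The plan is to repeat the argument of Theorem~\ref{thm:carnot} along the finite-horizon optimal trajectory. Periodicity is replaced by the storage bound in Proposition~\ref{pro:p2_upper_noncyc} and by the boundedness of the coordinate $c$ on the compact set $\Pi_1(\mathbb X)$. The error terms this produces are constants independent of $\Tf$. They will be divided by a quantity growing linearly in $\Tf$, so they vanish in the limit.

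\textbf{Step 1: the two integrals along the optimal trajectory.} Fix $x_0\in\mathbb X_0$ and an optimal trajectory $(u^\star,y^\star,x^\star)$ of OCP~\eqref{eq:OCP_Efficiency} with $\phi=\eta_+$. As in \eqref{eq:scalar_storage_identity}, write $H_{x_1}^\top\dot x_1^\star=H_c\dot c$ along this trajectory. Introduce
\[
A\doteq\int_0^{\Tf}[H_c\dot c]_+\,\mathrm dt,\quad B\doteq\int_0^{\Tf}[-H_c\dot c]_+\,\mathrm dt,\quad P\doteq\int_0^{\Tf}[\dot c]_+\,\mathrm dt,\quad N\doteq\int_0^{\Tf}[-\dot c]_+\,\mathrm dt .
\]
\begin{itemize}
\item Proposition~\ref{pro:p2_upper_noncyc} gives $E_2^\star\le A-B+\overline H$.
\item Inequality~\eqref{eq:u1y1} gives $E_{1,+}^\star\ge A$.
\item The extraction constraint~\eqref{eq:extraction_requirement} gives $E_2^\star\ge p\Tf$.
\end{itemize}
Combining the first and third, and using $B\ge0$, yields $A\ge p\Tf-\overline H$. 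Hence $A>0$ for $\Tf$ large and $A\to\infty$ as $\Tf\to\infty$. Since $E_2^\star\ge0$, this gives
\[
\eta_+^\star(x_0,\Tf)\le E_2^\star/A\le 1-\frac{B}{A}+\frac{\overline H}{A}.
\]

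\textbf{Step 2: replacing periodicity of $c$.} In the cyclic case one had $N=P$; here only a near-balance holds. Since $c$ is continuous and $\Pi_1(\mathbb X)$ is compact, the constant $\Delta_c\doteq\max c-\min c$ over $\Pi_1(\mathbb X)$ is finite. Then
\[
N=P-\bigl(c(x_1^\star(\Tf))-c(x_1(0))\bigr)\ge P-\Delta_c .
\]
Using the bounds of Assumption~\ref{ass:scalar_storage_coordinate} exactly as in \eqref{eq:thermal_upper}:
\begin{itemize}
\item $B\ge\underline H_c N\ge\underline H_c(P-\Delta_c)$;
\item $A\le\overline H_c P$, i.e.\ $P/A\ge 1/\overline H_c$ (here I assume $\overline H_c>0$, which is needed for $\eta_{\mathrm{carnot}}$ to be defined).
\end{itemize}
Substituting both into Step~1 gives
\[
\eta_+^\star(x_0,\Tf)\le 1-\frac{\underline H_c}{\overline H_c}+\frac{\underline H_c\Delta_c+\overline H}{A}.
\]

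\textbf{Step 3: passing to the limit.} Since $A\ge p\Tf-\overline H\to\infty$, the last term tends to zero. Taking $\limsup_{\Tf\to\infty}$ yields the claim.

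The main obstacle is Step~2: replacing the exact cancellation $N=P$ from \eqref{eq:equality_c} by a uniformly bounded defect. The other delicate point is making sure the denominator that absorbs the constants grows. For that I rely on the lower bound on $A$ obtained from the extraction constraint, rather than on the upper bound $A\le E_{1,+}^\star$, which points the wrong way.
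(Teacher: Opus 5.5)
Your proof is correct and follows essentially the paper's route. You carry the argument of Theorem~\ref{thm:carnot} over to the optimal trajectory, replace \eqref{eq:equality_c} by the defect bound $\int_0^{\Tf}[-\dot c^\star]_+\,\mathrm dt\ge\int_0^{\Tf}[\dot c^\star]_+\,\mathrm dt-\Delta_c$, and use Proposition~\ref{pro:p2_upper_noncyc} with the extraction constraint to make the denominator grow like $p\Tf-\overline H$. Your version is slightly more careful than the paper's: you keep the $\overline H/A$ correction from Proposition~\ref{pro:p2_upper_noncyc} in the efficiency bound itself, whereas the paper obtains \eqref{eq:thermal_upper_noncyc} from the cyclic argument (Lemma~\ref{lem:p2_upperbound}) and omits this term, which is harmless only because it vanishes as $\Tf\to\infty$.
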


\begin{pf}
Let $(x^\star(\cdot),u^\star(\cdot),y^\star(\cdot))$ be an optimal
trajectory of OCP~\eqref{eq:OCP_Efficiency}. By the  same argument as in the proof
of Theorem~\ref{thm:carnot},
\begin{equation}\label{eq:thermal_upper_noncyc}
\eta_{+}^\star(x_0,\Tf)
\le
1-
\frac{
\underline H_c
\int_0^{\Tf}[-\dot c^\star(t)]_+\,\mathrm dt}
{
\overline H_c
\int_0^{\Tf}[\dot c^\star(t)]_+\,\mathrm dt}.
\end{equation}
Here and below,
$c^\star(t)\doteq c(x_1^\star(t))$.

However, since the trajectory is not necessarily cyclic,
\eqref{eq:equality_c} no longer holds. Instead,
\[
\int_0^{\Tf}[-\dot c^\star(t)]_+\,\mathrm dt
\ge
\int_0^{\Tf}[\dot c^\star(t)]_+\,\mathrm dt
-
|c^\star(\Tf)-c^\star(0)|.
\]
Let
$
\Pi_1(x)=x_1
$
denote the projection onto the $x_1$-coordinates. Since $\mathbb X$ is
compact, $\Pi_1(\mathbb X)$ is compact as the continuous image of
$\mathbb X$. Hence, by continuity of $c$, there exists
$\Delta_c<\infty$ such that
\[
|c(x_1)-c(\tilde x_1)|
\le
\Delta_c,
\qquad
\forall\,x_1,\tilde x_1\in\Pi_1(\mathbb X).
\]
Substitution into~\eqref{eq:thermal_upper_noncyc} yields
\[
\eta_{+}^\star(x_0,\Tf)
\le
\eta_{\mathrm{carnot}}
+
\frac{\underline H_c\Delta_c}
{\overline H_c
\int_0^{\Tf}[\dot c^\star(t)]_+\,\mathrm dt}.
\]

It remains to show that the denominator diverges. By
Proposition~\ref{pro:p2_upper_noncyc} and
\eqref{eq:scalar_storage_identity},
\[
E_2^\star(x_0,\Tf)
\le
\int_0^{\Tf}H_c(t)\dot c^\star(t)\,\mathrm dt+\overline H
\le
\overline H_c
\int_0^{\Tf}[\dot c^\star(t)]_+\,\mathrm dt+\overline H.
\]
Together with the extraction constraint
$E_2^\star(x_0,\Tf)\ge p\Tf$, this gives
\[
\lim_{T\to \infty}\overline H_c
\int_0^{\Tf}[\dot c^\star(t)]_+\,\mathrm dt
\ge \lim_{T\to \infty}
p\Tf-\overline H =\infty.
\]
Hence, the correction term vanishes as $\Tf\to\infty$, and therefore
\[
\limsup_{\Tf\to\infty}\eta_{+}^\star(x_0,\Tf)
\le
\eta_{\mathrm{carnot}}. \tag*{\End}
\]
\end{pf}
The scalar storage coordinate in Assumption~\ref{ass:scalar_storage_coordinate}
is essential for the proof of Theorem~\ref{thm:carnot}. In particular, the
derivation of~\eqref{eq:thermal_upper} relies on the scalar identity
\[
[H_c\dot c]_+ = H_c[\dot c]_+,\qquad H_c>0,
\]
which holds only for a scalar storage coordinate.

In contrast, for a genuinely multi-dimensional storage coordinate
$c\in\mathbb R^n$, no analogous identity holds for
$H_c^\top\dot c$, since the components of $\dot c$ may have different
signs and cancel each other in the inner product. As a result, the
above proof cannot be extended directly. Nevertheless, the optimal
cyclic problem~\eqref{eq:OCP_clc_Eff_sup} remains well posed for general
block-diagonal port-Hamiltonian systems and, by
Theorem~\ref{thm:asymptotic_upper}, still provides a tight asymptotic
upper bound on the achievable conversion efficiency.

\begin{remark}[Relation to the classical Carnot cycle]~\\
\label{rem:carnot}
The bound in Theorem~\ref{thm:carnot} recovers the classical Carnot
efficiency as a special case. For a thermodynamic system, let $c=S$
denote entropy. By the Gibbs relation,
$
    H_c=\nabla_S H=\Theta,
$
with $\Theta$ denoted as temperature. Hence, the bound on $\eta_+$ reduces to
\[
    \eta_+\leq 1-\frac{\underline \Theta}{\overline \Theta},
\]
i.e., the classical Carnot bound.

However, there is a difference in the interpretation of the
corresponding cycle. The classical Carnot cycle considers heat exchange
with hot and cold reservoirs at fixed temperatures. Reversible heat
transfer is understood in the quasi-static limit for  isothermal phases, when the temperature
 of the  system and the respective reservoir
remains the same. In the present control formulation,  isothermal phase only require to keep the
temperature of the system constant; see also
Remark~2 in~\cite{Schaft2021a}. This allows the generalized Carnot
cycle to be considered as a dynamic process over a finite period, as
illustrated by the gas-piston example in Section~\ref{sec:simu_carnot}.
\end{remark}

\section{Simulation Examples}\label{sec:simulation}
The previous sections establish the relation between long-horizon and cyclic optimal operation and derive a Carnot-type efficiency bound under additional structural conditions. We illustrate these results using two complementary examples. First, a nonlinear reversible gas-piston system satisfying Assumption~\ref{ass:scalar_storage_coordinate} demonstrates that cyclic optimization of \({\eta_{+}}\) recovers the classical Carnot cycle and attains the corresponding Carnot bound. Second, a dissipative linear port-Hamiltonian system that does not satisfy Assumption~\ref{ass:scalar_storage_coordinate}  illustrates the general asymptotic relation between finite-horizon and optimal cyclic efficiencies. The OCPs are implemented in MATLAB R2025a using CasADi 3.7.2 and solved with IPOPT.

\subsection{A reversible lossless gas-piston system}~\label{sec:simu_carnot}

In this example, we consider a reversible, lossless gas–piston system. We show that optimizing the conversion efficiency \({\eta_{+}}\) over cyclic operations recovers the Carnot cycle. In contrast, since the system is lossless, the efficiency \(\eta\) equals one for any cyclic operation and therefore does not distinguish between different cyclic operating regimes.

 Consider the following gas-piston system with incoming entropy flow
\begin{subequations}\label{eq:gas_piston}
    \begin{align}
    \begin{bmatrix}
        \dot S\\
        \dot V\\
        \dot \pi
    \end{bmatrix} &= \begin{bmatrix}
        0 & 0 & 0\\
        0 &0 &A \\
        0 &- A & 0
    \end{bmatrix} \begin{bmatrix}
        \nabla_S H\\
        \nabla_V H\\
        \nabla_\pi H
    \end{bmatrix} + \begin{bmatrix}
        1& 0\\
        0& 0 \\
        0 & 1
    \end{bmatrix} \begin{bmatrix}
        u_1\\
        u_2
    \end{bmatrix}
\end{align}
where $S$ is the entropy of the system, $V$ is the volume of the gas,
$\pi$ is the momentum of the piston. The Hamiltonian $H$ consists of the internal energy of the gas and the kinetic energy of the piston
\begin{equation}
     H(S,V,\pi) = U(S,V) + \frac{\pi^2}{2m},\,
\end{equation}
with 
\[
 U(S,V) = U_0 e^{\frac{2S}{3 RN}}  V^{-\frac{2}{3}},\,U_0
=
\frac{3}{2}NR\temp_0
e^{-\frac{2s_0}{3R}}
\left(\frac{NR\temp_0}{P_0}\right)^{2/3}
\]
with the following properties
 \begin{align}
     \nabla_\pi H &= \frac{\pi}{m} \doteq v\\
    \nabla_S H &= \nabla_S U  = \frac{2 U}{3 N R
    } =  \frac{2U_0}{3 N R} e^{\frac{2S}{3 RN}}  V^{-\frac{2}{3}} \doteq \temp\label{eq:temperature}\\
\nabla_V H &=\nabla_V U= -\frac{2U }{3 V} = - \frac{2}{3}U_0 e^{\frac{2S}{3 RN}}  V^{-\frac{5}{3}}\doteq -P\\
PV &= NR \temp
 \end{align}
 where $v$, $\temp$, and $P$ represent the velocity of the piston, the temperature, and the pressure of the system. The chosen parameters are summarized in Table~ \ref{tab:params}
 
 The outputs of the system are
 \begin{equation}   y_1 = \nabla_S H = \temp, \quad y_2 = \nabla_\pi H = \frac{\pi}{m}.
 \end{equation}
 \end{subequations}
This system is \textit{reversible and lossless} since it contains neither dissipation nor internal entropy production; entropy changes only through the external entropy-flow input. Irreversible thermodynamic processes require an extended pH formulation accounting explicitly for entropy production; see
~\cite{Ramirez2013,Philipp2024}.

\begin{table}[t]
\centering
\caption{Parameters and reference values.}
\label{tab:params}
\begin{tabular}{lcc}
\toprule
Parameter & Unit & Value \\
\midrule
Number of moles \(N\) & [--] & 0.01 \\
Reference pressure \(P_0\) & [Pa] & 101.325 \\
Reference temperature \(\temp_0\) & [K] & 273 \\
Reference molar entropy \(s_0\) & [kJ K\(^{-1}\) mol\(^{-1}\)] & 0.1100 \\
Mass \(m\) & [kg] & 0.5\\
Area \(A\) & [m\(^2\)] & 0.5 \\
\bottomrule
\end{tabular}
\end{table}

We impose the constraints
\begin{gather*}
0.01 \leq V \leq 10,\qquad
240 \leq \temp \leq 320,\qquad
-4 \leq \pi \leq 4,\\
-0.5 \leq u_1 \leq 0.5,\qquad
-5 \leq u_2 \leq 5.
\end{gather*}
The minimum energy extraction constraint is imposed with $p=0.5$.
In view of \eqref{eq:carnot}, the corresponding Carnot bound is
\[
\eta_{\mathrm{carnot}}
=
1-\frac{\underline \temp}{\overline \temp}
=
1-\frac{240}{320}
=
0.25.
\]

We do not impose direct bounds on the entropy. Instead, the volume is constrained to reflect the finite cylinder geometry. Since the entropy, temperature, and volume are related through the thermodynamic  relation~\eqref{eq:temperature}, the entropy is implicitly bounded by the imposed constraints on \(V\) and \(\temp\). 

The continuous-time system is discretized using the implicit midpoint method~\citep{Aoues2013} with sampling time \(T_s=0.05\,\mathrm{s}\). Since the Hamiltonian is nonlinear, the resulting discretization does not, in general, preserve the continuous-time energy balance exactly. 

We solve OCP~\eqref{eq:OCP_clc_Eff} for periods \(\Tp \in [2,10]\) with an increment of \(0.1\,\mathrm{s}\). As shown in Fig.~\ref{fig:effciency_nonlinear}, the cyclic conversion
efficiency reaches the Carnot bound $\eta_{\mathrm{Carnot}} =0.25$ for $\Tp\geq 3.0\,\mathrm{s}$, which is consistant as Theorem~\ref{thm:carnot}. Within the considered period grid, the
bound is first attained at $\Tp=3.0\,\mathrm{s}$; hence, following the
convention in Assumption~\ref{ass:Tpstar}, we select $\Tp^\star=3.0\,\mathrm{s}$.

As anticipated in Remark~\ref{rem:carnot}, the optimal solution provides
a finite-period dynamic realization of the generalized Carnot cycle.
Indeed, Fig.~\ref{fig:phase_nonlinear} shows that the optimal solution
of OCP~\eqref{eq:OCP_clc_Eff} for $\Tp=3.0\,\mathrm{s}$ recovers the
cycle introduced in Section~\ref{sec:carnot} with
\[
x_1 = S, \qquad H_{x_1} = \temp,
\]
so that the isothermal-like phases correspond to constant temperatures $\temp = \overline \temp$ and $\temp = \underline \temp$, while the adiabatic-like phases correspond to constant entropies $S = \overline S$ and $S = \underline S$. 

\begin{figure}
    \centering
\includegraphics[width=0.75\columnwidth]{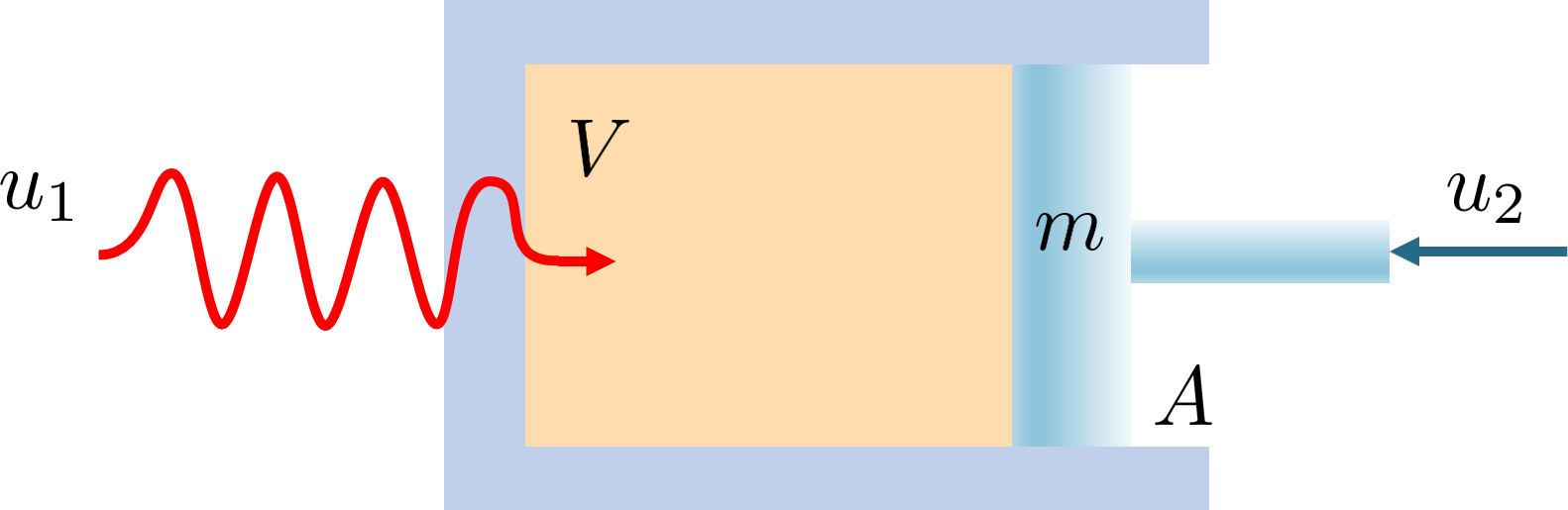}
    \caption{Schematic sketch of a gas-piston system.}
    \label{fig:gas_piston}
\end{figure}

\begin{figure}[t]
\centering
\includegraphics[width=1\columnwidth]{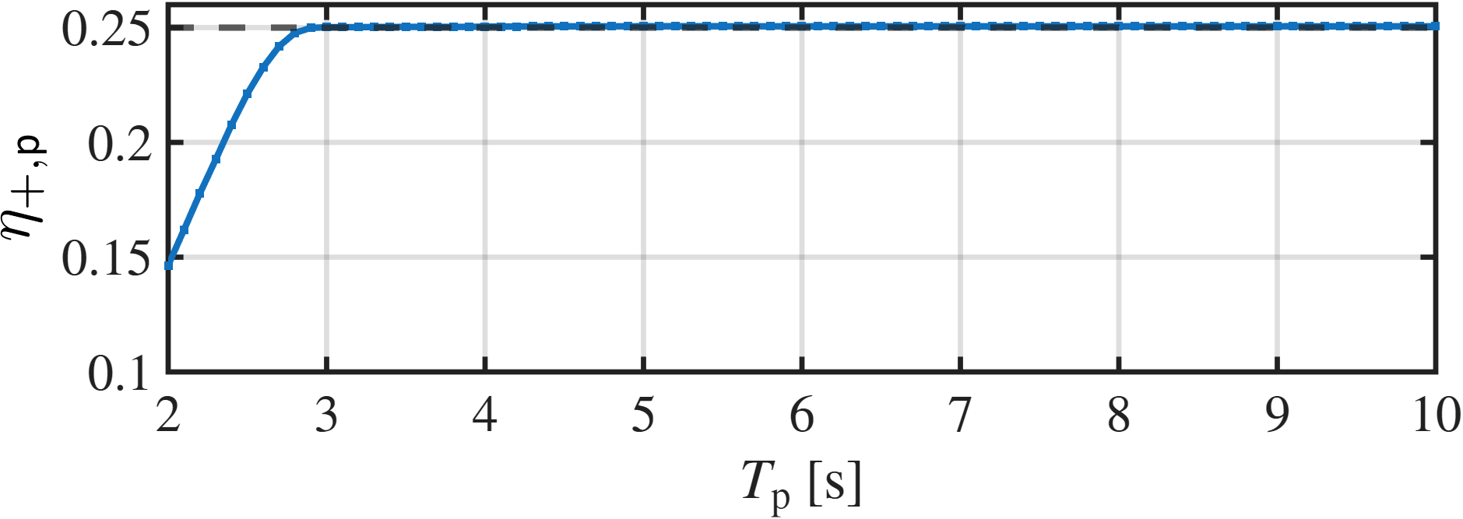}
\caption{Optimal cyclic conversion efficiency $\eta_{+,\text p} (T_p)$ for OCP~\eqref{eq:OCP_clc_Eff} with $\Tp \in [2,10]\, \mathrm{s}$ with system~\eqref{eq:gas_piston} (step size $0. 1 \, \mathrm{s}$). }
\label{fig:effciency_nonlinear}
\end{figure}

\begin{figure}
    \centering
\includegraphics[width=1\columnwidth]{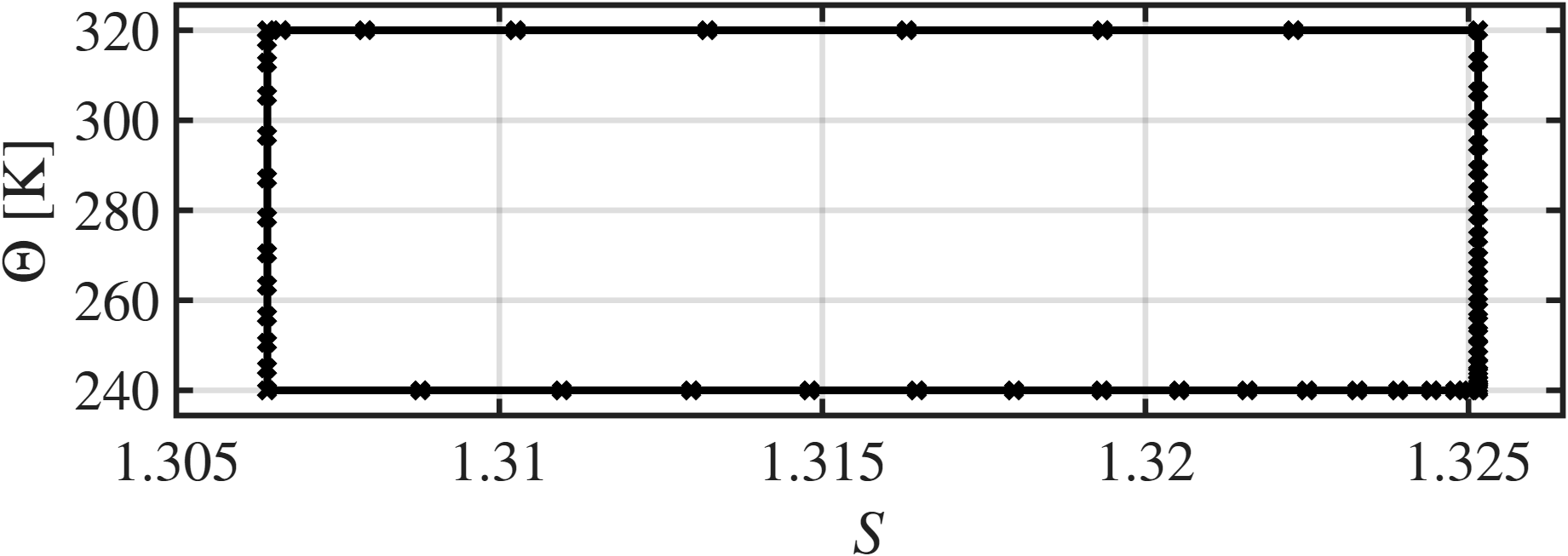}
    \caption{Solution to OCP~\eqref{eq:OCP_clc_Eff} with system~\eqref{eq:gas_piston} for $\Tp = 3.0\,\mathrm{s}$ in  $S-\Theta$ phase plane.}
    \label{fig:phase_nonlinear}
\end{figure}

\subsection{Linear port-Hamiltonian system in block-diagonal form}

We consider a linear port-Hamiltonian system with block-diagonal 
interconnection structure
\begin{equation*}
    \dot{x} = (JQ-R)x + Bu, 
    \qquad
    y = B^\top Qx,
\end{equation*}
where $R = 0.01 I_4$ and
\begin{align*}
J=
\begin{bmatrix}
\phantom{-}0  &1& \phantom{-}0 & 0\\
-1 & 0 & \phantom{-}0 & 0\\
\phantom{-}0 &0 &\phantom{-}0 & 1\\
\phantom{-}0 &0 & -1 & 0
\end{bmatrix}, \quad
Q=
\begin{bmatrix}
\phantom{-}1 & -1 & \phantom{-}0 & 0\\
-1 & \phantom{-}3 & -1 & 0\\
\phantom{-}0 & -1 & \phantom{-}1 &0\\
\phantom{-}0 & \phantom{-}0 &\phantom{-}0 & 1
\end{bmatrix}, \quad
B=
\begin{bmatrix}
0 &0\\
1 & 0\\
0 & 0\\
0 & 1
\end{bmatrix}.
\end{align*}
Let
\[
x =
\begin{bmatrix}
x_{1,1} & x_{1,2} & x_{2,1} & x_{2,2}
\end{bmatrix}^{\top},
\]
with
$x_1=[x_{1,1},x_{1,2}]^\top$ and
$x_2=[x_{2,1},x_{2,2}]^\top$.
The Hamiltonian
\[
H(x)=\frac{1}{2}x^\top Qx
\]
does not satisfy Assumption~\ref{ass:scalar_storage_coordinate}.
Hence, the Carnot-type bound of Theorem~\ref{thm:carnot} does not
apply to this example.

We impose
\begin{align*}
\begin{bmatrix}
-4\\-4
\end{bmatrix}
\leq x_1 \leq
\begin{bmatrix}
4\\4
\end{bmatrix},
\,
\begin{bmatrix}
-2\\-5
\end{bmatrix}
\leq x_2 \leq
\begin{bmatrix}
2\\5
\end{bmatrix},
\,
\begin{bmatrix}
-10\\-10
\end{bmatrix}
\leq u \leq
\begin{bmatrix}
10\\10
\end{bmatrix}.
\end{align*}
The minimum energy extraction constraint is imposed with $p=0.5$.

The continuous-time system is discretized using the implicit midpoint
method~\citep{Aoues2013} with sampling time $T_s=0.5\,\mathrm{s}$.
We first solve OCP~\eqref{eq:OCP_clc_Eff} for periods
$\Tp\in[2,20]\,\mathrm{s}$ with a step size of $0.5\,\mathrm{s}$.
As shown in Fig.~\ref{fig:effciency}, maximizing either $\eta$ or
${\eta_{+}}$ yields the same optimal efficiencies for all considered
periods. The maximum cyclic efficiency is attained at
$\Tp=8\,\mathrm{s}$, with
\[
\eta_\mathrm p(8)=\eta_{+,\mathrm p}(8)=0.9436.
\]

\begin{figure}[t]
    \centering
    \includegraphics[width=\columnwidth]{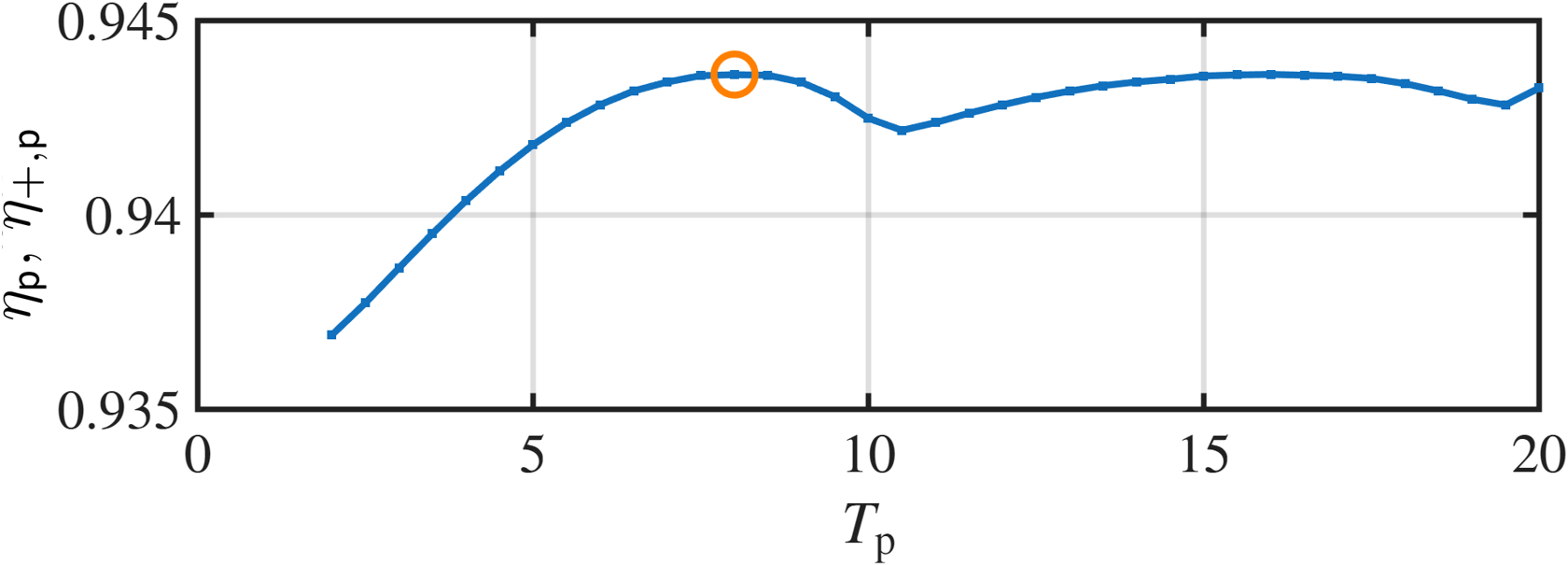}
    \caption{Optimal cyclic conversion efficiencies $\eta_p(\Tp)$ and
    $\eta_{+,\mathrm p}(\Tp)$ for OCP~\eqref{eq:OCP_clc_Eff} with
    $\Tp\in[2,20]\,\mathrm{s}$ and a step size of $0.5\,\mathrm{s}$.
    The circle marker indicates the maximum efficiency.}
    \label{fig:effciency}
\end{figure}

Next, we solve the finite-horizon OCP~\eqref{eq:OCP_Efficiency} for
\[
\Tf\in\{10,20,50,100,150,250\}\,\mathrm{s}
\]
with the fixed initial condition $x_0=[0,0,0,0]^\top$.
Motivated by the constructive argument in the proof of Theorem~\ref{thm:asymptotic_lower} , the finite-horizon OCP is warm-started by concatenating an initial transition with repetitions of the optimal cyclic trajectory.

The resulting optimal efficiencies are shown in
Fig.~\ref{fig:efficency_200}. As the horizon increases, both
efficiencies approach the best cyclic efficiency
$\eta_\mathrm p(8)=\eta_{+,\mathrm p(8)}=0.9436$. This numerical behavior is consistent
with the asymptotic results in Section~\ref{sec:asymptotic}: the
contribution of the initial and terminal transients becomes negligible
for increasing horizons, while cyclic operation determines the
long-horizon conversion performance.

\begin{figure}
    \centering
    \includegraphics[width=\linewidth]{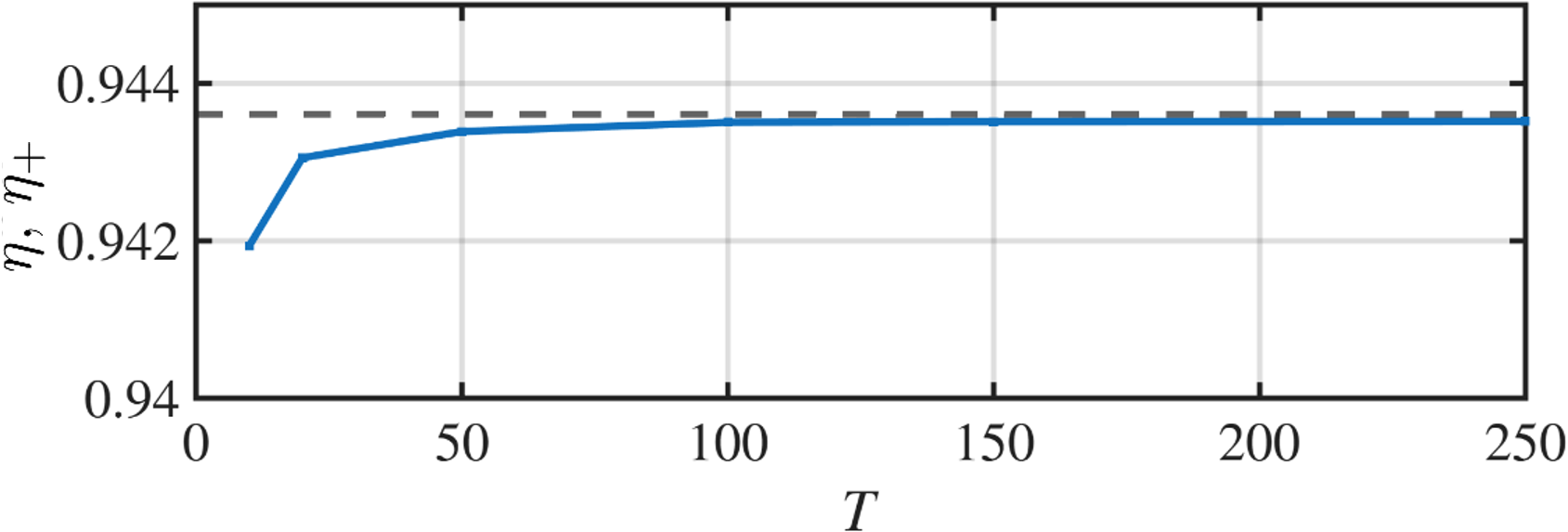}
    \caption{Optimal finite-horizon conversion efficiencies $\eta^\star$
    and $\eta_{+}^\star$ for OCP~\eqref{eq:OCP_Efficiency} with
    $\Tf\in\{10,20,50,100,150,250\}\,\mathrm{s}$. The dashed horizontal
    line indicates the best cyclic efficiency
    $\eta_\mathrm p=\eta_{+,\mathrm p}=0.9436$, attained for $\Tp=8\,\mathrm{s}$.}
    \label{fig:efficency_200}
\end{figure}

Figure~\ref{fig:input} shows the optimal port powers
$u_1^\top y_1$ and $-u_2^\top y_2$ for $\Tf=100\,\mathrm{s}$.
Along this optimal trajectory, $u_1^\top y_1\geq0$ for all
$t\in[0,100]$. Consequently,
\[
E_{1,+}=E_1,
\]
and the two efficiency measures coincide, i.e., $\eta={\eta_{+}}$.
In contrast, the extracted power $-u_2^\top y_2$ changes sign along
the trajectory, indicating that Port~2 alternates between energy
extraction and energy injection. Nevertheless, its net integral remains
positive and satisfies the prescribed minimum extraction constraint.

\begin{figure}
    \centering
    \includegraphics[width=\linewidth]{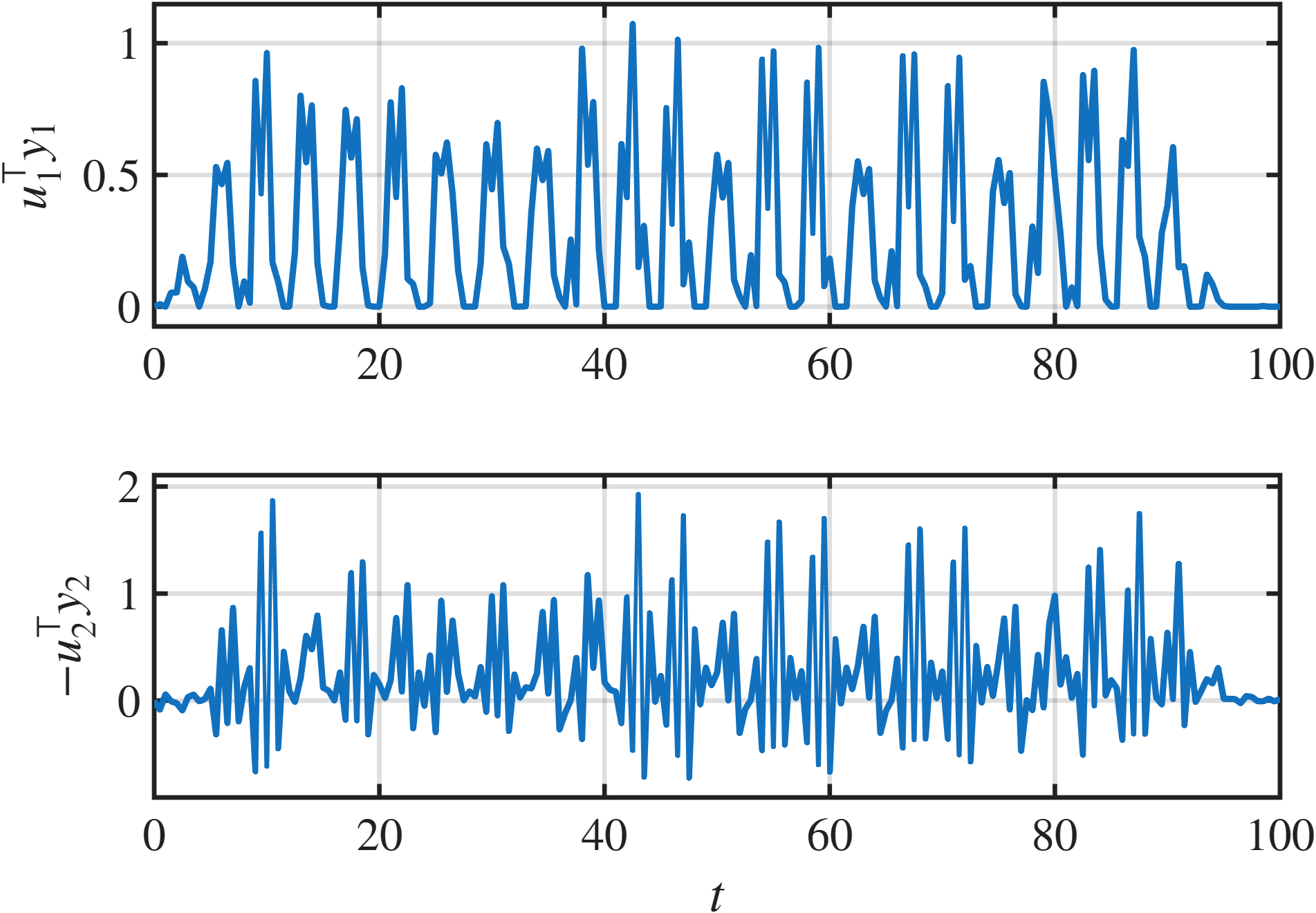}
    \caption{Optimal port powers $u_1^\top y_1$ and $-u_2^\top y_2$
    for OCP~\eqref{eq:OCP_Efficiency} with $\Tf=100\,\mathrm{s}$.}
    \label{fig:input}
\end{figure}

Although Assumption~\ref{ass:scalar_storage_coordinate} is not
satisfied and the Carnot-type bound of Section~\ref{sec:carnot} is
therefore not applicable, this example illustrates that the optimal
cyclic problem remains informative. In particular, the best cyclic
efficiency provides a tight characterization of the observed
long-horizon optimal conversion performance, consistent with the
general asymptotic results of Section~\ref{sec:asymptotic}.

\section{Conclusion}\label{sec:conclusion}

This work provides a control-theoretic perspective on energy conversion in port-Hamiltonian systems, revealing how long-horizon optimal behavior can be closely linked to cyclic operating regimes. In particular, the analysis shows that, under suitable controllability and reachability conditions, cyclic operation naturally emerges as an asymptotically optimal mechanism for achieving nontrivial energy conversion in structured systems. 

For the class of block-diagonal port-Hamiltonian systems introduced
by~\cite{Schaft2021,Schaft2022}, the system structure admits a more
refined interpretation. In particular, nontrivial energy conversion is
shown to be inherently dynamic, since steady-state operation cannot
sustain net energy extraction through Port~2. Furthermore, under an
additional scalar structural restriction, the generalized Carnot
efficiency constitutes a rigorous upper bound on the achievable
positive-injected conversion efficiency. This establishes a direct
connection between structural properties of port-Hamiltonian systems and
thermodynamic efficiency limits.

A promising direction for future research is the investigation of turnpike properties with respect to optimal cyclic orbits, which may provide a deeper understanding of the repeated emergence of optimal cyclic operation in long-horizon optimal trajectories. Finally, incorporating uncertainty and disturbances into the analysis could extend the framework to energy conversion problems arising in more realistic and uncertain environments.

\section*{Appendix}~\label{sec:Appendix}

We briefly discuss sufficient conditions ensuring the existence of optimal
controls for the finite-horizon OCP~\eqref{eq:OCP_Efficiency} 
as well as for the cyclic OCP~\eqref{eq:OCP_clc_Eff}. The argument is
based on the classical existence theorem by~\cite[Theorem~4, p. 259]{LeeMarkus1967}.

\begin{proposition}[Existence of optimal controls] 
\label{prop:existence_optimal_controls}
Let
\begin{itemize}
    \item[i)] the state and input constraint sets
    $\mathbb X\subset\mathbb R^{n_x}$ and
    $\mathbb U\subset\mathbb R^{n_u}$ be nonempty and compact, with
    $\mathbb U$ convex;

    \item[ii)] the mappings $J$, $R$, $B$, and $H_x$ be continuous on
    $\mathbb X$;

    \item[iii)] for the considered $x_0\in\mathbb X$ and $\Tf> S_a(x_0)/p$, OCP~\eqref{eq:OCP_Efficiency} be feasible 

    \item[iv)] for the considered $\Tp> S_a(x_0)/p$,
    OCP~\eqref{eq:OCP_clc_Eff} be feasible.
\end{itemize}
Then, for each $\phi\in\{\eta,{\eta_{+}}\}$, there exists an optimal control
for OCP~\eqref{eq:OCP_Efficiency}. Moreover, there exists an optimal cyclic
control for OCP~\eqref{eq:OCP_clc_Eff}.
\end{proposition}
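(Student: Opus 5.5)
The plan is the direct method of the calculus of variations, in the spirit of the Filippov/Lee--Markus existence theorem. The feature that makes this work is that the dynamics~\eqref{eq:PH} are control-affine, $\dot x=f(x)+B(x)u$ with $f=(J-R)H_x$, and $\mathbb U$ is compact and convex. The objective is a ratio of integral functionals, which is not of Lee--Markus form. So I would first reduce to a maximization over the set of admissible pairs, and then show that this set is compact in a topology in which both numerator and denominator are continuous (or suitably semicontinuous).

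\textbf{Step 1 (a priori bounds).} Take a feasible pair $(x,u)$ for OCP~\eqref{eq:OCP_Efficiency}.
\begin{itemize}
\item Since $x(t)\in\mathbb X$ and $u(t)\in\mathbb U$ with both sets compact, and $J,R,B,H_x$ are continuous, $|\dot x|$ is uniformly bounded. Hence the admissible states are equi-Lipschitz and uniformly bounded.
\item By Arzel\`a--Ascoli, every maximizing sequence $(x^k,u^k)$ has a subsequence with $x^k\to x$ uniformly.
\item Since $u^k$ takes values in the bounded set $\mathbb U$, a further subsequence satisfies $u^k\rightharpoonup u$ weakly in $L^2$.
\item Convexity and closedness of $\mathbb U$ give $u(t)\in\mathbb U$ a.e. Closedness of $\mathbb X$ gives $x(t)\in\mathbb X$.
\end{itemize}

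\textbf{Step 2 (passage to the limit in the dynamics).} Write
\[
x^k(t)=x_0+\int_0^t f(x^k)\,\mathrm ds+\int_0^t B(x^k)u^k\,\mathrm ds .
\]
The first integral converges by uniform convergence of $x^k$. For the second, $B(x^k)\to B(x)$ uniformly, so $B(x^k)u^k\rightharpoonup B(x)u$ weakly. Therefore $x$ solves~\eqref{eq:PH} with input $u$. In the cyclic case, $x_0^k\to x_0$ along a subsequence by compactness of $\mathbb X$, and the condition $x(\Tp)=x_0$ persists in the limit.

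\textbf{Step 3 (the port energies, main step).} The port powers are $u_i^\top y_i=u_i^\top B_i(x)^\top H_x(x)$. Each is linear in $u$ with a coefficient that converges uniformly, so $E_1^k\to E_1$ and $E_2^k\to E_2$. In particular the extraction constraint~\eqref{eq:extraction_requirement} is preserved.

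The difficulty is $E_{1,+}$. The positive part is convex in $u$, so weak convergence only gives lower semicontinuity, $E_{1,+}\le\liminf_k E_{1,+}^k$. This is the favourable direction: we maximize $E_2/E_{1,+}$, so
\[
\frac{E_2}{E_{1,+}}\ge \limsup_k \frac{E_2^k}{E_{1,+}^k}=\sup .
\]
Positivity of the denominators is needed to divide. For $\eta$ it comes from Proposition~\ref{pro:nE_Sa}, which gives $E_1^k\ge p\Tf-S_a(x_0)>0$ uniformly in $k$ using assumption iii). For $\eta_+$ we also use $E_{1,+}\ge E_1$, so the limit denominators stay positive. Hence the limit pair is admissible, and its efficiency is at least the supremum, which makes it optimal.

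\textbf{Step 4 (cyclic problem).} For OCP~\eqref{eq:OCP_clc_Eff}, periodicity gives $E_1\ge E_2\ge p\Tp>0$ by the passivity inequality~\eqref{eq:supplied}, since $H(x(\Tp))=H(x(0))$. The same arguments then apply verbatim.

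The main obstacle is Step 3: justifying the passage to the limit for the nonsmooth $[\cdot]_+$ term and keeping the denominators uniformly away from zero. The lower-semicontinuity argument above is what I would rely on, instead of trying to convexify the ratio into Lee--Markus form.
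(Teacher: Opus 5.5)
Your argument is correct, but it takes a different route from the paper.

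\textbf{The paper's route.} The paper never passes to the limit by hand. It rewrites both problems in Mayer form, with augmented states $z_1,z_2$ accumulating $E_1$ and $E_2$. For $\eta_+$ it replaces the nonsmooth positive part by an auxiliary control $q$ with $0\le q\le p_1$ and $q\ge u_1^\top y_1$, so that every augmented velocity set is compact and convex. It then shows this relaxation is exact: lowering $q$ to $[u_1^\top y_1]_+$ cannot decrease $z_2/z_+$, since $z_2\ge pT>0$. Finally, it observes that the denominators are bounded away from zero, so the fractional terminal costs are continuous, and invokes the Lee--Markus existence theorem.

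\textbf{Your route.} You run the compactness argument directly. Arzel\`a--Ascoli handles the states, weak $L^2$ compactness handles the controls, and control-affinity lets you pass to the limit in the dynamics and in $E_1,E_2$. You treat $E_{1,+}$ through weak lower semicontinuity of a convex integral functional, which points in the favourable direction when maximizing $E_2/E_{1,+}$.

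\textbf{Comparison.} The two devices are the same idea in different guises. The paper's epigraph variable $q$ is exactly what lets a limit of relaxed controls only dominate $[u_1^\top y_1]_+$ rather than equal it, and that is your lower semicontinuity. The paper's route is shorter, but it relies on the hypotheses of an external theorem being met for a non-standard fractional cost, which it does not check in detail. Your route is self-contained and needs no auxiliary control. It also works with admissible pairs $(x,u)$, so it does not require uniqueness of solutions for a given input; this matters because $H_x$ is only assumed continuous.

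\textbf{One detail to make explicit.} In the lower-semicontinuity step the coefficient $B_1^\top(x^k)H_x(x^k)$ also depends on $k$. This is harmless, because $|[a]_+-[b]_+|\le|a-b|$, the coefficient converges uniformly, and $u^k$ is uniformly bounded. So you may freeze the coefficient at its limit before invoking convexity.
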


\begin{pf}
Define
\[
f(x,u)
\doteq
\bigl(J(x)-R(x)\bigr)H_x(x)+B(x)u
\]
and introduce the accumulated extracted energy
\[
\dot z_2=-u_2^\top y_2,
\qquad
z_2(0)=0.
\]

For $\phi=\eta$, additionally introduce
\[
\dot z_1=u_1^\top y_1,
\qquad
z_1(0)=0.
\]
The objective is then represented by the terminal cost
\[
\eta=\frac{z_2(\Tf)}{z_1(\Tf)}.
\]
By i)--ii), the augmented dynamics are continuous and affine in $u$.
Hence, for each fixed $x\in\mathbb X$, the corresponding augmented
velocity set
\[
\left\{
\begin{bmatrix}
f(x,u)\\
u_1^\top y_1\\
-u_2^\top y_2
\end{bmatrix}
\,\middle|\,
u\in\mathbb U
\right\}
\]
is compact and convex.

For $\phi={\eta_{+}}$, introduce an auxiliary control $q$ and the accumulated
positive injected energy
\[
\dot z_+=q,
\qquad
z_+(0)=0,
\]
subject to
\[
0\le q\le p_1,
\qquad
q\ge u_1^\top y_1.
\]
Since $|u_1^\top y_1|\le p_1$, this constraint is feasible and the
corresponding augmented velocity set
\[
\left\{
\begin{bmatrix}
f(x,u)\\
-u_2^\top y_2\\
q
\end{bmatrix}
\,\middle|\,
u\in\mathbb U,\;
0\le q\le p_1,\;
q\ge u_1^\top y_1
\right\}
\]
is compact and convex for every fixed $x\in\mathbb X$.

This augmented formulation is equivalent to the original
${\eta_{+}}$-problem. Indeed, for any feasible pair $(u,q)$, replacing $q$
by
\[
\widetilde q=[u_1^\top y_1]_+
\]
preserves the state trajectory and all constraints, while satisfying
$\widetilde q\le q$ almost everywhere. Since the extraction constraint
implies $z_2(\Tf)\ge p\Tf>0$, this replacement cannot decrease the
objective $z_2(\Tf)/z_+(\Tf)$. Consequently, an optimal solution can be
chosen such that
\[
q=[u_1^\top y_1]_+
\quad\text{almost everywhere},
\]
and therefore $z_+(\Tf)=E_{1,+}$.

For every feasible finite-horizon trajectory, the available-storage
inequality and the extraction constraint yield
\[
z_1(\Tf)
=
E_1
\ge
E_2-S_a(x_0)
\ge
p\Tf-S_a(x_0)>0.
\]
Moreover,
\[
z_+(\Tf)=E_{1,+}\ge E_1.
\]
Thus, the denominators of both fractional objectives are uniformly
bounded away from zero, and the terminal costs
\[
(z_1,z_2)\mapsto\frac{z_2}{z_1},
\qquad
(z_+,z_2)\mapsto\frac{z_2}{z_+}
\]
are continuous on the corresponding feasible terminal sets. The
existence theorem in~\cite{LeeMarkus1967} therefore yields an optimal
control for OCP~\eqref{eq:OCP_Efficiency} for each
$\phi\in\{\eta,{\eta_{+}}\}$.

For the cyclic problem, periodicity and passivity imply
\[
E_1\ge E_2\ge p\Tp>0,
\qquad
E_{1,+}\ge E_1.
\]
Hence, both cyclic fractional objectives are continuous on their
feasible sets. In addition, the periodicity condition
\[
x(0)=x(\Tp)
\]
defines a closed endpoint constraint, while the initial state ranges
over the compact set $\mathbb X$. Applying the same existence argument
therefore yields an optimal cyclic control for
OCP~\eqref{eq:OCP_clc_Eff} for each
$\phi\in\{\eta,{\eta_{+}}\}$.
\End
\end{pf}

\bibliography{ALL_GP}
\end{document}